\documentclass[11pt,final]{amsart}
\setlength{\hoffset}{-1in}\hoffset-1in
\setlength{\textwidth}{17cm}
\usepackage{amsmath, amsfonts, amsthm, amssymb,mathtools}
\usepackage{subfigure}
\usepackage{stmaryrd}
\usepackage{verbatim}
\usepackage{hyperref}
\usepackage{color}
\usepackage{ulem}
\usepackage[dvips]{graphicx}
\usepackage{enumerate}

\usepackage{showkeys}
\usepackage[marginclue, inline]{fixme}

\linespread{1.3}

\numberwithin{equation}{section}
\topmargin = 0pt
\voffset = -10pt
\addtolength{\textheight}{2cm}
\newtheorem{theorem}{Theorem}[section]
\newtheorem{corollary}[theorem]{Corollary}
\newtheorem{lemma}[theorem]{Lemma}
\newtheorem{proposition}[theorem]{Proposition}

\theoremstyle{definition}
\newtheorem{definition}[theorem]{Definition}
\newtheorem{remark}[theorem]{Remark}
\newtheorem{example}[theorem]{Example}

\newcommand{\ind}{1\hspace{-2.1mm}{1}} 

\newcommand{\RR}{\mathbb{R}}
\newcommand{\PP}{\mathbb{P}}
\newcommand{\EE}{\mathbb{E}}
\newcommand{\Dd}{\mathcal{D}}
\newcommand{\Ii}{\mathcal{I}}
\newcommand{\Pp}{\mathcal{P}}
\newcommand{\Ll}{\mathcal{L}}
\newcommand{\D}{\mathrm{d}}
\newcommand{\eps}{\varepsilon}

\newcommand{\E}{\mathrm{e}}

\newcommand{\QQ}{\mathbb{Q}}

\newcommand{\Nn}{\mathcal{N}}
\newcommand{\Ff}{\mathcal{F}}
\newcommand{\mm}{\mathfrak{m}}
\newcommand{\nn}{\mathfrak{n}}

\newcommand{\BS}{\mathrm{BS}}

\bibliographystyle{alpha}

\begin{document}

\title{Implied volatility in strict local martingale models}

\author{Antoine Jacquier}
\address{Department of Mathematics, Imperial College London}
\email{a.jacquier@imperial.ac.uk}
\author{Martin Keller-Ressel}
\address{TU Dresden, Fachrichtung Mathematik, Institut f\"ur Mathematische Stochastik}
\email{Martin.Keller-Ressel@tu-dresden.de}
\date{\today}
\keywords {implied volatility, asymptotic methods, strict local martingale, asset price bubbles}
\subjclass[2010]{91G20, 60G48}
\thanks{AJ acknowledges financial support from the EPSRC First Grant EP/M008436/1. 
MKR acknowledges funding from the Excellence Initiative of the German Research Foundation (DFG) under grant ZUK 64.}

\maketitle
\begin{abstract}We consider implied volatilities in asset pricing models, where the discounted underlying is a strict local martingale under the pricing measure. Our main result gives an asymptotic expansion of the right wing of the implied volatility smile and shows that the strict local martingale property can be determined from this expansion. This result complements the well-known asymptotic results of Lee and Benaim-Friz, which apply only to true martingales. This also shows that `price bubbles' in the sense of strict local martingale behaviour can in principle be detected by an analysis of implied volatility. Finally we relate our results to left-wing expansions of implied volatilities in models with mass at zero by a duality method based on an absolutely continuous measure change.
\end{abstract}
\tableofcontents


\section{Introduction}

Most models in mathematical finance are bound to the paradigm that discounted asset prices are true martingales under the risk neutral measure. However, no-arbitrage theory 
as developed by Delbaen and Schachermayer~\cite{DelScha94, DelScha2, DelSchaBook} actually allows for the more general case of discounted asset prices being local martingales (or even sigma martingales) while still remaining consistent with absence of arbitrage in the sense of No Free Lunch with Vanishing Risk. Processes which fall in the latter category, but not in the former, i.e., which are local martingales but not true martingales are usually called strict local martingales. Although not trivial to construct, the appearance of strict local martingales in certain local and stochastic volatility models has been noted e.g. in~\cite{Jourdain, Lewis, Sin, DelScha}. 
In contrast to true martingale models, where the current asset price equals the risk-neutral expectation of the future discounted asset price (`fundamental price'), these two values differ from each other in strict local martingale models. For this very reason, strict local martingale models have been interpreted as models for markets with price bubbles, see e.g.~\cite{WillardBubbles, HestonBubbles} or the review paper~\cite{ProtterReview} for a detailed discussion.
For one-dimensional It\^o diffusions without drift, a precise characterisation of strict local martingales was obtained in~\cite{BleiEngelbert, DelShi, EngelbertSenf};
this was recently used in a series of papers by Jarrow, Kchia and Protter~\cite{Jarrow, Kchia1, Kchia2} to 
propose a statistical test based on realised volatility to determine whether a given underlying (LinkedIn's stock and gold) exhibits a price bubble.
In the context of option pricing, it has been shown that the strict local martingale property leads to unexpected and counter-intuitive behaviour. In particular Put-Call-parity fails, classical no-static-arbitrage bounds are no longer valid and there is some ambiguity about the proper valuation of derivatives with unbounded payoffs, such as Calls. 
We refer to~\cite{CoxHob} and a more elaborate discussion in section~\ref{Sub:option}.

In this paper we focus on the properties of implied volatilities and of the resulting implied volatility surface in asset pricing models with the strict local martingale property. With the exception of~\cite{Tehranchi}, 
who discusses long-time behaviour of implied volatilities, this is to our knowledge the first paper to discuss implied volatilities without imposing the true martingale assumption on the underlying. Due to the failure of Put-Call-parity and no-static-arbitrage bounds, it turns out that Put- and Call-implied volatility has to be distinguished and that Call-implied volatility does not always exist, see Theorem~\ref{thm:ExistenceIV}. 
Following an idea of Cox and Hobson~\cite{CoxHob}, we introduce the `fully collateralised Call', 
which restores Put-Call-parity and leads again to equality of Put-implied and Call-implied volatilities. 
Nevertheless, our main result, Theorem~\ref{thm:IVExpansion} shows that the right wing of the implied volatility smile in strict local martingale models exhibits an asymptotic behaviour which is fundamentally different from true martingale models. This also shows that in principle, strict local martingale models can be distinguished from true martingale models by analysing the implied volatility surface. Moreover, this result complements the well-known results of Lee~\cite{Lee} and Benaim and Friz~\cite{BenaimFriz} on the behaviour of the wings of the implied volatility smile in true (non-negative) martingale models.

Certain refinements of this main result are then obtained in Corollary~\ref{cor:dual}, by applying a duality method based on an absolutely continuous measure change. 
This method is well-known in the context of strict local martingales (see e.g.~\cite{CarrRuf, DelScha, KardNik, Pal, Ruf}) and puts positive strict local martingale models in duality with true martingale models with (positive) probability mass at zero. 
Recently, De Marco, Jacquier and Hillairet~\cite{DMHJ} and Gulisashvili~\cite{Guli} showed that, 
for a true martingale with mass at zero, the left tail of the smile was fully determined (up to second order) 
by the probability weight of this very mass.
Applying the duality method, the results on right-wing asymptotics of strict local martingales in this paper can be seen as a direct analogue of the left-wing asymptotics of true martingales with mass at zero.

\section{Preliminaries}

\subsection{Market models based on strict local martingales and stock price bubbles}\label{sub:market}
Let $(\Omega, \Ff, (\Ff_t)_{t\geq 0}, \QQ)$ be a probability space with a filtration satisfying the usual conditions and let 
$(S_t)_{t \geq 0}$ be a c\`adl\`ag non-negative local $\QQ$-martingale starting at $S_0 = 1$. 
As a non-negative local martingale, $S$ is also a supermartingale. Hence $\EE^{\QQ}[S_t]$ exists, 
is  bounded by $1$ and is a decreasing function of $t \geq 0$. We set
\begin{equation}\label{eq:mg_defect}
\mm_t := 1 - \EE^{\QQ}[S_t],
\end{equation}
and call $\mm_t$ the martingale defect (at $t \geq 0$) of~$S$. 
Clearly, $t \mapsto \mm_t$ is increasing, 
takes values in $[0,1]$ and $S$ is a true martingale on $[0,T]$ if and only if $\mm_T = 0$. We will be mainly interested in the complementary case $\mm_T > 0$, in which $S$ is a local martingale on $[0,T]$, but not a true martingale, i.e., a \emph{strict local martingale}. 
The size of~$\mm_T$ quantifies the difference to being a true martingale, hence the term `martingale defect'. The other boundary case $\mm_T = 1$ takes place if and only $S_T = 0$, $\QQ$ almost surely. 
We exclude this degenerate case from our analysis and work under the assumption that $\mm_t < 1$ for all $t \in [0,T]$.
We interpret $S$ as the price of a stock and $\QQ$ as a given pricing measure which determines the prices of derivative contracts. 
The historical/statistical measure will not play a role in our analysis and we do not make a priori assumptions on market completeness. 
The normalisation of~$S_0$ to~$1$ and the absence of discounting serve to simplify notation and arguments. 
Starting values $S_0 \neq 1$ can be accommodated by simple scaling, and discounting by interpreting~$S$ as a forward price 
and by adjusting option strikes to forward strikes. 

Recall that the market model $(S,\QQ)$ does not allow for arbitrage opportunities in the sense of No Free Lunch with Vanishing Risk (NFLVR). 
Indeed, the first fundamental theorem of asset pricing states that the NFLVR condition holds 
if and only if there exists an equivalent probability measure under which the (discounted) semimartingale stock price 
is a sigma martingale\footnote{If the stock price is locally bounded (in particular if it is continuous or has uniformly bounded jumps) 
then `sigma martingale' can be replaced by `local martingale'.}. 
As all local martingales are sigma martingales, 
the strict local martingale property of~$S$ under $\QQ$ implies NFLVR. 
We refer the interested reader to~\cite{DelScha2} or~\cite{DelSchaBook} for a precise definition of the NFLVR condition and details of the first fundamental theorem of asset pricing.

Concrete examples of market models where $\mm_T > 0$, i.e., where $S$ is a strict local $\QQ$-martingale appear in both local volatility and stochastic volatility models~\cite{Jourdain, Lewis, Sin}.  With regards to option pricing, strict local martingale models can exhibit quite unexpected behaviour: 
as we will discuss in more detail in section~\ref{Sub:option} below, Put-Call-parity fails and Put and Call prices may violate the classic no-static-arbitrage bounds in such models. Moreover, there exist strikes for which the European Call is strictly cheaper than its American counterpart and the European lookback Call options with payoff $\max_{0\leq t\leq T}S_t - K$ has infinite values for all $K\geq 0$, see~\cite{CoxHob, HestonBubbles}. 

Finally, note that from~\eqref{eq:mg_defect} the martingale defect $\mm_t$ can be interpreted as the difference between the current asset price $S_0 = 1$ and the risk-neutral expectation of its future value $S_t$ (its `fundamental value'). Following e.g.~\cite{HestonBubbles, ProtterReview} a non-zero difference therefore constitutes a \emph{price bubble} in which traded values of assets diverge from fundamental values; a phenomenon that cannot appear under a true martingale assumption on $S$.

\begin{example}\label{ex:diffusion}
Let~$X$ be the unique weak solution to the stochastic differential equation
$\D X_t = \sigma(X_t)\D W_t$, with $X_0=1$, 
where $\sigma$ is bounded away from zero and infinity on each compact subset of $(0,\infty)$, satisfies $\sigma(0)=0$,
and $W$ is a Brownian motion.
The behaviour of~$X$ at zero and its martingale property can be read directly 
from the following integrability conditions on $\sigma$ (cf.~\cite{DelShi}):
\begin{itemize}
\item $X_t>0$ for all $t\geq 0$ almost surely if and only if 
\begin{equation}\label{eq:integral1}
\int_{0}^{1}\frac{x}{\sigma^{2}(x)}\D x = \infty;
\end{equation}
\item $X$ is a strict local martingale if and only if
\begin{equation}\label{eq:integral2}
\int_{1}^{\infty}\frac{x}{\sigma^{2}(x)}\D x < \infty.
\end{equation}
\end{itemize}
As proved in~\cite{BleiEngelbert, EngelbertSenf} this classification remains valid under even weaker requirements on $\sigma$.
\end{example}

\subsection{Option pricing in strict local martingale models}\label{Sub:option}
Even though strict local martingale models are consistent with absence of arbitrage (in the sense of NFLVR), several apparent pathologies emerge in the context of option pricing, see e.g.~\cite{CoxHob} and~\cite{HestonBubbles} for a detailed discussion. We will summarise the phenomena which are relevant in the context of implied volatility. 
First let us define Call and Put prices on the underlying $S$ (and for fixed maturity $T$) by their risk-neutral expectations
\[
C_S(x) := \EE^{\QQ}(S_T - \E^x)_+ 
\qquad \text{and}\qquad
P_S(x) := \EE^{\QQ}(\E^x - S_T)_+,
\]
where in view of our analysis of implied volatility we parametrise by log-strike $x = \log K$. In complete markets these prices are indeed the unique minimal super-replication prices of their payoffs (see~\cite{CoxHob}), but---as will be seen below---other sensible prices for Puts and Calls which are consistent with absence of arbitrage exist. 
It is easy to see that 
\begin{equation}\label{eq:pc_parity}
C_S(x) - P_S(x) = 1 - \E^x - \mm_T
\end{equation}
holds for all real number~$x$, and hence Put-Call parity fails whenever $\mm_T > 0$, 
i.e., exactly when $S$ is a strict local martingale~\cite[Theorem~3.4]{CoxHob}. 
A further pathology emerges when we consider price bounds. 
A direct application of Jensen's inequality yields the inequalities
\begin{align}
\label{eq:call_bounds}
(1 - \mm_T - \E^x)_+ & \le C_S(x) < 1 - \mm_T, \\
\label{eq:put_bounds}
(\E^x - 1 + \mm_T)_+ & \le P_S(x) < \E^x,
\end{align}
for the Call and the Put, valid for all $x\in\RR$. 
For $\mm_T = 0$ the process $S$ is a true martingale and these bounds become the classic no-static-arbitrage bounds for Calls and Puts.\footnote{No-static-arbitrage refers to elementary static replication arguments that do no take into account admissibility of trading strategies. 
This notion has to be distinguished from no-arbitrage in the sense of NFLVR which does take into account admissibility of strategies.} 
In the strict local martingale case, $\mm_T > 0$, the lower bound for the Call falls 
outside the no-static-arbitrage region. 
The lower bound for the Put on the other hand increases with $\mm_T$ 
and thus always remains within the no-static-arbitrage region. 
Note that for any $x \in \RR$, both lower bounds can be attained, which follows from Example~\ref{ex:bridge} where a strict local martingale with the property that $S_T = 1 - \mm_T$, almost surely is given.
As a consequence of Theorem~\ref{thm:ExistenceIV} below, it also follows that for each given strict local martingale $S$ with defect $\mm_T$, 
there exists $x^*\in [\log(\mm_T), 0]$ such that the corresponding Call price $C_S(x)$ violates 
the no static-arbitrage lower bound $(1-\E^{x})_+$ for all $x < x^*$. 

These pathologies appear to contradict the fact that strict local martingale models are consistent with absence of arbitrage (in the NFLVR sense). 
Consider for example a Call option that is valued at the lower bound in~\eqref{eq:call_bounds}, that is $C_S(x) = (1 - \mm_T - \E^x)_+$. 
Then, choosing a log-strike $x$ such that  $x \le \log(1 - \mm_T)$, it is possible to form a costless portfolio consisting of a long position in the Call $C_S(x)$, a short position of one unit of the stock and $\mm_T + \E^x$ in the bank account. 
The payoff at maturity of this portfolio is $\mm_T + (\E^{x}- S_T)_+ > 0$, 
and we have apparently constructed an arbitrage. The resolution of this paradox is that due to the short position in $S$ the value process of the portfolio is unbounded from below and therefore not an admissible strategy in the sense of~\cite[Definition 2.7]{DelScha94}. 
Indeed, if the portfolio value were bounded from below then $S$ would be bounded from above and hence a true martingale, in contradiction to the strict martingale property of $S$. 

These considerations regarding admissibility of strategies are not entirely academic. 
Short positions usually require deposit of collateral, which restricts the scope of implementable hedging strategies. 
For this reason it has been remarked by several authors~\cite{CoxHob, HestonBubbles, MadanYor} that 
defining the Call-price by the risk-neutral expectation of its payoff is not the only economically sensible choice. 
We follow here the definition in~\cite{CoxHob}, where the authors consider a short position in Calls and argue that such positions are usually subject to collateral requirements. Thus, the value process $V$ of a hedging portfolio must not only replicate the Call payoff at maturity $T$, but also satisfy the collateral requirement $V_t \ge G(S_t)$ at intermediate times. 
Here, the function $G$ is used to describe the amount of collateral needed in relation to the stock price. The following is proved in~\cite{CoxHob}:
\begin{theorem}[Theorem 5.2 in~\cite{CoxHob}]\label{thm:Collateral}
Let~$G$ be a positive convex function satisfying $\limsup_{s\uparrow\infty}\frac{G(s)}{s} = \alpha$,
and~$H$ an arbitrary payoff satisfying $H\geq G$, then the fair price (at inception) of a European option with payoff~$H(S_T)$ is equal to 
$\EE^{\QQ}(H(S_T)) + \alpha \mm_T$.
\end{theorem}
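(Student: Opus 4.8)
The plan is to characterise the fair price as the minimal cost $V_0$ of an admissible self-financing portfolio whose value process $V$ satisfies both the terminal constraint $V_T \ge H(S_T)$ and the collateral constraint $V_t \ge G(S_t)$ for all $t \in [0,T]$, and then to establish the two matching bounds $V_0 \ge \EE^{\QQ}(H(S_T)) + \alpha\mm_T$ and $V_0 \le \EE^{\QQ}(H(S_T)) + \alpha\mm_T$. Throughout I would use the crucial observation that, since $G \ge 0$, every collateralised value process is non-negative, hence (being a self-financing value process in the local martingale $S$) a non-negative $\QQ$-supermartingale. I may assume $\EE^{\QQ}(H(S_T)) < \infty$, the formula being trivially true otherwise. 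I would also record two consequences of convexity that drive the whole argument: writing $\widetilde G(s) := G(s) - \alpha s$, the function $\widetilde G$ is convex and, since $G'(s)\le\alpha$ for all $s$, non-increasing; and $G(s)/s \to \alpha$ as $s\uparrow\infty$.

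For the upper bound I would exhibit an explicit replicating portfolio. Set $U_t := \EE^{\QQ}(H(S_T) - \alpha S_T \mid \Ff_t)$, a genuine closed $\QQ$-martingale since $H(S_T)$ and $S_T$ are integrable, and define $V_t := \alpha S_t + U_t$, that is, a buy-and-hold position of $\alpha$ shares plus the claim paying the residual $H(S_T) - \alpha S_T$ at~$T$. Then $V_T = H(S_T)$ exactly, and
\[
V_0 = \alpha S_0 + \EE^{\QQ}(H(S_T)) - \alpha\EE^{\QQ}(S_T) = \alpha + \EE^{\QQ}(H(S_T)) - \alpha(1-\mm_T) = \EE^{\QQ}(H(S_T)) + \alpha\mm_T,
\]
using $S_0 = 1$ and the definition of $\mm_T$ in~\eqref{eq:mg_defect}. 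The collateral constraint is exactly where convexity enters: $H \ge G$ gives $U_t \ge \EE^{\QQ}(\widetilde G(S_T)\mid\Ff_t)$, and since $\widetilde G$ is convex and non-increasing while $S$ is a non-negative supermartingale, conditional Jensen followed by $\EE^{\QQ}(S_T\mid\Ff_t)\le S_t$ yields
\[
\EE^{\QQ}(\widetilde G(S_T)\mid\Ff_t) \ge \widetilde G\big(\EE^{\QQ}(S_T\mid\Ff_t)\big) \ge \widetilde G(S_t),
\]
so that $V_t = \alpha S_t + U_t \ge \alpha S_t + \widetilde G(S_t) = G(S_t)$; in particular $V \ge 0$ and the strategy is admissible. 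This produces an admissible collateralised portfolio that replicates $H(S_T)$ at the asserted cost. In a complete market $U$ is literally replicable by trading in $S$; otherwise $V$ still serves as a super-replication and the bound is unaffected.

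For the lower bound I would take any admissible collateralised super-hedge $V$ and localise at the first-passage times $\tau_n := \inf\{t\ge 0 : S_t \ge n\}$. Optional sampling of the supermartingale $V$ at $\tau_n\wedge T$ gives
\[
V_0 \ge \EE^{\QQ}(V_{\tau_n\wedge T}) = \EE^{\QQ}\big(V_T\ind_{\{\tau_n > T\}}\big) + \EE^{\QQ}\big(V_{\tau_n}\ind_{\{\tau_n\le T\}}\big) \ge \EE^{\QQ}\big(H(S_T)\ind_{\{\tau_n > T\}}\big) + G(n)\,\QQ(\tau_n\le T),
\]
where the last step uses $V_T \ge H(S_T)$ and $V_{\tau_n}\ge G(S_{\tau_n}) \ge G(n)$. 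As $n\uparrow\infty$ the first term increases to $\EE^{\QQ}(H(S_T))$ by monotone convergence, since $S$ does not explode and $\ind_{\{\tau_n>T\}}\uparrow 1$. For the second term I would invoke $\lim_{n\to\infty} n\,\QQ(\tau_n\le T) = \mm_T$, which in the continuous case follows from optional sampling of $S$ itself, namely $1 = S_0 = \EE^{\QQ}(S_T\ind_{\{\tau_n>T\}}) + n\,\QQ(\tau_n\le T)$ together with $\EE^{\QQ}(S_T\ind_{\{\tau_n>T\}})\uparrow\EE^{\QQ}(S_T) = 1-\mm_T$. Combining with $G(n)/n\to\alpha$ gives $G(n)\,\QQ(\tau_n\le T)\to\alpha\mm_T$, whence $V_0 \ge \EE^{\QQ}(H(S_T)) + \alpha\mm_T$.

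The genuinely delicate point, and the step I expect to require the most care, is the lower bound's accounting of the bubble term through the collateral constraint. A naive supermartingale estimate on $V$ (or on $V - \alpha S$, or on $V - U$) only recovers $\EE^{\QQ}(H(S_T))$: the correction $\alpha\mm_T$ is precisely the mass that the strict local martingale $S$ loses ``to infinity'', and it can only be captured by evaluating the collateral floor at the high-water times $\tau_n$ and passing to the limit via $\lim_n n\,\QQ(\tau_n\le T)=\mm_T$. This also explains structurally why the correction depends on $H$ only through the asymptotic slope $\alpha$ of the floor $G$: the bubble is incurred solely through the $\alpha$ shares one is forced to hold as collateral, while everything above the floor is funded by the true martingale $U$ and hence priced at its plain $\QQ$-expectation. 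The remaining technicality is the case of jumps, where $S_{\tau_n}\ge n$ may overshoot and the stopped process need not be uniformly integrable; there the identity for $\mm_T$ and the optional-sampling steps would need the additional localisation arguments of~\cite{CoxHob}.
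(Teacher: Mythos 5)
The paper does not prove Theorem~\ref{thm:Collateral} at all: it is imported verbatim as Theorem~5.2 of Cox and Hobson~\cite{CoxHob}, so there is no in-paper proof to compare against. Judged against the original argument, your proposal is essentially the Cox--Hobson proof and is correct in the setting in which the theorem is asserted (continuous paths, complete market). Both halves match: the superhedge holding $\alpha$ shares plus the closed martingale $U_t = \EE^{\QQ}(H(S_T)-\alpha S_T \mid \Ff_t)$, with the collateral floor verified via conditional Jensen applied to the convex, non-increasing $\widetilde G(s)=G(s)-\alpha s$ together with $\EE^{\QQ}(S_T\mid\Ff_t)\le S_t$; and the optimality bound via optional sampling of the non-negative supermartingale $V$ at the hitting times $\tau_n$, the bubble term entering through $G(n)\,\QQ(\tau_n\le T)\to\alpha\mm_T$. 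Your convexity bookkeeping is sound: for convex $G$ the right derivative increases to a limit, so the $\limsup$ in the hypothesis is a genuine limit, $G'_+\le\alpha$, and $\widetilde G$ is non-increasing. Note also that the key identity $\lim_n n\,\QQ(\tau_n\le T)=\mm_T$, which you derive by stopping $S$ in the continuous case, is exactly the statement $\pi_T^S=\mm_T$ that this paper records after Proposition~\ref{prop:MadanYor}, citing the appendix of~\cite{CoxHob}; you could simply have invoked it.

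Two scope points deserve correction or emphasis. First, your incomplete-market fallback---``otherwise $V$ still serves as a super-replication and the bound is unaffected''---is wrong as stated: the fair price is the minimal initial fortune of a \emph{self-financing wealth process}, and without completeness the martingale $U$ need not be a stochastic integral against $S$, so $V=\alpha S+U$ is not a wealth process and the upper bound is not established; in general the minimal super-replication cost is then a supremum over pricing measures and can strictly exceed $\EE^{\QQ}(H(S_T))+\alpha\mm_T$. The theorem should be read, as in~\cite{CoxHob} and as the surrounding text of this paper indicates, under a completeness hypothesis. Second, the step $G(S_{\tau_n})\ge G(n)$ needs either path continuity (so $S_{\tau_n}=n$) or, for $\alpha>0$, the eventual monotonicity of the convex $G$; you flag the jump/overshoot issue yourself, and it indeed lies outside the quoted theorem's framework. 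With these scope clarifications the argument is complete and faithful to the original.
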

By `fair price', Cox and Hobson mean the smallest initial fortune required to construct a self-financing wealth process super-replicating both the payoff at maturity and the collateral requirement along the life of the contract. 
In the case of a European Call option (with maturity $T$ and strike $\E^{x}$),
$H(S_T) = (S_T - \E^x)_+$, 
Theorem~\ref{thm:Collateral} implies that the fair price of the Call under collateralisation is given by 
\begin{equation}\label{eq:collateralized_call}
C_S^\alpha(x) := C_S(x) + \alpha \mm_T.
\end{equation}
We shall refer to $C_S^\alpha(x)$ as the value of an $\alpha$-collateralised Call. 
Note that only values $\alpha \in [0,1]$ make sense due to the requirement on $G$. 
It is clear that the collateral requirement does not affect the Call price in a true martingale model where $\mm_T = 0$. 
In a strict local martingale model prices of collateralised Calls differ from prices of uncollateralised Calls. 
Of particular interest is the fully collateralised Call $C_S^1(x)$, 
which coincides with the European Call price proposed by Madan and Yor in~\cite{MadanYor} for 
strict local martingale models: 
let~$(\tau_n)_{n\geq 1}$ be any sequence of stopping times increasing to infinity,
such that the stopped process $(S_{t\wedge\tau_n})_{t\geq 0}$ is a uniformly integrable martingale
for each $n\geq 1$.
The Madan-Yor European option price is then defined as
$C^{\mathrm{MY}}_S(x):=\lim_{n\uparrow\infty}\EE^{\QQ}\left(S_{T\wedge\tau_n} - \E^{x}\right)_+$.
\begin{proposition}[Proposition 2 in~\cite{MadanYor}]\label{prop:MadanYor}
For any $x\in\RR$, the following equalities hold:
$$
C^{\mathrm{MY}}_S(x)
 = (1-\E^{x})_+ + \frac{1}{2}\EE^{\QQ}(\Ll_T^x)
 = C_S(x) + \pi_T^S,
$$
where $(\Ll_t^x)_{t\geq 0}$ denotes the local time of~$S$ at level~$\E^{x}$,
and where the correction term reads
$$
\pi_T^S
 = \lim_{z\uparrow\infty}z\QQ\left(\sup_{0\leq u \leq T}S_u \geq z\right).
$$
\end{proposition}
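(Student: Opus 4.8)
The plan is to combine Tanaka's formula (which generates the local time term) with a direct analysis of the loss of mass incurred when passing from the stopped processes to $S_T$ (which generates the correction term $\pi_T^S$). Throughout I take $S$ to be continuous, as in~\cite{MadanYor}, so that the local time $\Ll^x$ and the hitting times used below behave as expected.

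For the \emph{first equality}, I would apply Tanaka's formula to the convex function $s\mapsto(s-\E^x)_+$ and to the continuous semimartingale $S$ stopped at $\tau_n$,
\[
(S_{T\wedge\tau_n}-\E^x)_+ = (1-\E^x)_+ + \int_0^{T\wedge\tau_n}\ind_{\{S_u>\E^x\}}\,\D S_u + \tfrac12\Ll_{T\wedge\tau_n}^x.
\]
Since $(S_{t\wedge\tau_n})_{t\geq0}$ is a uniformly integrable martingale and the integrand is bounded by $1$, the stochastic integral is a true martingale of zero expectation; applying $\EE^{\QQ}$ therefore leaves $\EE^{\QQ}(S_{T\wedge\tau_n}-\E^x)_+ = (1-\E^x)_+ + \tfrac12\EE^{\QQ}(\Ll_{T\wedge\tau_n}^x)$. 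Because $\tau_n\uparrow\infty$ almost surely we have $T\wedge\tau_n\uparrow T$, and since $t\mapsto\Ll_t^x$ is non-decreasing the local times $\Ll_{T\wedge\tau_n}^x$ increase to $\Ll_T^x$; monotone convergence then yields $C^{\mathrm{MY}}_S(x)=(1-\E^x)_+ + \tfrac12\EE^{\QQ}(\Ll_T^x)$. This simultaneously shows that the limit defining $C^{\mathrm{MY}}_S(x)$ exists, is finite (bounded by $1$, as $\EE^{\QQ}S_{T\wedge\tau_n}=1$), and is independent of the chosen localising sequence.

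For the \emph{second equality} I would exploit this sequence-independence and specialise to the hitting times $\tau_n=\inf\{t\geq0:S_t\geq n\}$, for which continuity gives $S_{\tau_n}=n$ on $\{\tau_n\leq T\}=\{S_T^*\geq n\}$, writing $S_T^*:=\sup_{0\leq u\leq T}S_u$. Subtracting $C_S(x)=\EE^{\QQ}(S_T-\E^x)_+$ and noting that the integrand vanishes on $\{\tau_n>T\}$, I obtain
\[
C^{\mathrm{MY}}_S(x)-C_S(x) = \lim_{n\uparrow\infty}\Big[(n-\E^x)\,\QQ\big(S_T^*\geq n\big) - \EE^{\QQ}\big(\ind_{\{S_T^*\geq n\}}(S_T-\E^x)_+\big)\Big].
\]
Here $(S_T-\E^x)_+\leq S_T$ is integrable, so the second expectation tends to $0$ by dominated convergence as $\{S_T^*\geq n\}\downarrow\emptyset$ (using $S_T^*<\infty$ almost surely), while $\E^x\,\QQ(S_T^*\geq n)\to0$; the remaining term $n\,\QQ(S_T^*\geq n)$ therefore converges, and its limit is by definition $\pi_T^S$. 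Monotonicity of $z\mapsto\QQ(S_T^*\geq z)$ upgrades the limit along integers to the continuous limit in the definition of $\pi_T^S$, which gives the second equality.

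The vanishing of the stochastic integral and the two convergence arguments are routine; the step I expect to require the most care is the passage to the limit in the overshoot term, namely establishing that $n\,\QQ(S_T^*\geq n)$ genuinely converges and identifying its limit with $\pi_T^S$. A clean way to secure convergence is the optional-stopping identity $1=\EE^{\QQ}S_{T\wedge\tau_n}=\EE^{\QQ}(S_T\ind_{\{S_T^*<n\}})+n\,\QQ(S_T^*\geq n)$, which by monotone convergence forces $n\,\QQ(S_T^*\geq n)\to 1-\EE^{\QQ}S_T=\mm_T$; this both proves existence of $\pi_T^S$ and incidentally identifies $\pi_T^S=\mm_T$, consistent with the fully collateralised Call $C^{\mathrm{MY}}_S=C_S^1$. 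The only genuinely model-dependent assumption used is the continuity of $S$, needed both for Tanaka's formula and for $S_{\tau_n}=n$; in the presence of jumps one would additionally have to account for the overshoot of $S$ across the level $n$ and for the jump corrections in the Tanaka--Meyer formula.
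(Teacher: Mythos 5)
The paper contains no proof of this proposition; it is imported verbatim as Proposition~2 of~\cite{MadanYor}, so the relevant comparison is with the original argument of Madan and Yor, and your route is essentially theirs: Tanaka's formula along the stopped processes to produce the local-time term, then specialisation to the hitting times $\tau_n = \inf\{t \ge 0 : S_t \ge n\}$, for which $S^{\tau_n}$ is bounded and $S_{\tau_n} = n$ on $\{\tau_n \le T\} = \{S_T^* \ge n\}$, to identify the correction term. Your monotone-convergence step correctly delivers existence of the limit and its independence of the admissible sequence, the squeeze $n\,\QQ(S_T^* \ge n+1) \le z\,\QQ(S_T^* \ge z) \le (n+1)\,\QQ(S_T^* \ge n)$ legitimately upgrades the integer limit to the continuous one, and the optional-stopping identity $1 = \EE^{\QQ}\big(S_T \ind_{\{S_T^* < n\}}\big) + n\,\QQ(S_T^* \ge n)$ is a genuine bonus: it proves that $n\,\QQ(S_T^* \ge n)$ converges and identifies $\pi_T^S = \mm_T$, a fact the paper itself only obtains by quoting the appendix of~\cite{CoxHob}, and which is consistent with $C_S^{\mathrm{MY}} = C_S^1$ and with~\eqref{eq:large_strike}.

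One justification is wrong as stated, though the conclusion is recoverable: you claim that since $S^{\tau_n}$ is a uniformly integrable martingale and the integrand $\ind_{\{S_u > \E^x\}}$ is bounded by $1$, the stochastic integral is a true martingale. Stochastic integration with bounded integrands preserves $H^1$ (by Burkholder--Davis--Gundy, $\EE^{\QQ}\big[\big(\int_0^{T} \ind_{\{S_u>\E^x\}}^2\,\D\langle S^{\tau_n}\rangle_u\big)^{1/2}\big] \le \EE^{\QQ}\big[\langle S^{\tau_n}\rangle_T^{1/2}\big]$), but it does \emph{not} preserve the class of uniformly integrable martingales: a continuous UI martingale need not satisfy $\EE^{\QQ}\big[\sup_{t \le T}|S_{t\wedge\tau_n}|\big]<\infty$, and bounded integrands against such integrators can yield strict local martingales. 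Since sequence-independence is part of the proposition's content, you cannot evade this by restricting to the hitting times, where $S^{\tau_n}$ is bounded and the issue disappears. The fix is routine: localise the integral by a further sequence $\sigma_m \uparrow \infty$, so that $\EE^{\QQ}\big[(S_{T\wedge\tau_n\wedge\sigma_m}-\E^x)_+\big] = (1-\E^x)_+ + \tfrac12 \EE^{\QQ}\big[\Ll^x_{T\wedge\tau_n\wedge\sigma_m}\big]$ holds exactly; then note that $S_{T\wedge\tau_n\wedge\sigma_m} = \EE^{\QQ}\big[S_{T\wedge\tau_n} \,\big|\, \Ff_{T\wedge\tau_n\wedge\sigma_m}\big]$, so the family $\{(S_{T\wedge\tau_n\wedge\sigma_m}-\E^x)_+\}_{m}$ is uniformly integrable, and let $m \uparrow \infty$ using uniform integrability on the left and monotone convergence for the local time on the right. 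This yields the expectation identity at $T\wedge\tau_n$ without asserting martingality of the integral, after which the rest of your argument goes through unchanged.
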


A non-zero correction term $\pi_T^s$ corresponds to the necessary and sufficient condition for a price bubble given in~\cite[Theorem 3.4, Equation (5)]{CoxHob}. In~\cite[Appendix]{CoxHob} it is also shown that in fact $\pi_T^s$ is equal to the martingale defect $\mm_T$ and thus the Madan-Yor Call price $C^{\mathrm{MY}}_S(x)$ coincides with the fully collateralised Call price $C^1_S(x)$. 
Also the Call prices $G^2$ and $G^1$ discussed in~\cite{HestonBubbles} in the context of strict local martingale models correspond precisely to the uncollateralised and the fully collateralised Call price $C_S=C_S^0$ and $C_S^1$ respectively. 
Finally, also the `generalised fair value' of the Call discussed in~\cite[Chapter~5]{Lewis} is exactly the fully collateralised Call price. 
Inserting into~\eqref{eq:pc_parity} and~\eqref{eq:call_bounds} we see that
\begin{equation}\label{eq:pc_parity_full}
C^1_S(x) - P_S(x) = 1 - \E^x,
\qquad\text{and}\qquad
\min(\mm_T,1 - \E^x) \le C^1_S(x) \le 1,
\end{equation}
that is for the fully collateralised Call price, Put-Call-parity is restored and the pricing bounds are always within the no-static-arbitrage region.

Finally, let us calculate the following limits for large strikes, which will be needed later. 
\begin{equation}\label{eq:large_strike}
\begin{split}
\lim_{x\uparrow\infty}C^\alpha_S(x)
 & = \lim_{x\uparrow\infty} C_S(x) + \alpha \mm_T = \lim_{x \uparrow \infty} \EE^{\QQ}(S_T - \E^x)_+ + \alpha \mm_T = \alpha \mm_T\\
\lim_{x\uparrow\infty}\left[P_S(x) - \E^{x}\right]
 & = \lim_{x\uparrow\infty}\EE^{\QQ}\left((\E^x - S_T)_+ - \E^x\right)
 = - \lim_{x\uparrow\infty}\EE^{\QQ}\min(\E^x,S_T) = \mm_T - 1,
\end{split}
\end{equation}
where we have used dominated convergence and the supermartingale property of $S$ to evaluate the limits. Since~$\mm_T$ appears explicitly, the large-strike limit of Put and Call prices can be used to characterise the strict local martingale property of $S$, see also~\cite[Theorem~3.4(iv)]{CoxHob}.

\subsection{Duality with respect to true martingales with mass at zero}\label{sub:duality}
An important concept related to positive strict local martingales, is their duality relationship to true martingales with mass at zero.

\begin{definition}\label{def:duality}
Let $\QQ$ and $\PP$ be probability measures on a filtered measure space and let $T > 0$ be a fixed time horizon. 
Let $S$ be a strictly positive local $\QQ$-martingale and $M$ be a non-negative true $\PP$-martingale on $[0,T]$. Denote by $\tau:=\inf\{t>0: M_t=0\}$ the first hitting time (of~$M$) of zero and assume that $\tau$ is predictable and $\tau > 0$, $\PP$-a.s. 
We say that the pair~$(S,\QQ)$ is in duality to $(M,\PP)$ if $\QQ$ is absolutely continuous with respect to~$\PP$ on~$\Ff_T$, with 
\[\left.\frac{\D\QQ}{\D\PP}\right|_{\Ff_T} = M_T \qquad \text{and} \qquad S_t = \frac{1}{M_t} \quad \text{$\PP$-a.s. on $\{t < \tau \wedge T\}$}.\]
\end{definition}
Note that the above definition requires that $S$ is a strictly positive local martingale, 
a slightly stronger assumption than the non-negativity assumption made in Section~\ref{sub:market}.  
In financial modelling, $\QQ$ can be interpreted as the `share measure' corresponding to the stock price~$M$ 
under~$\PP$ or---in the context of currency models---as the `foreign measure' corresponding to the domestic measure~$\PP$ and the exchange rate process~$M$~\cite[Chapter~17]{Bjork}. 
Also note that since~$M$ is a true non-negative~$\PP$-martingale, 
zero is necessarily absorbing~\cite[Chapter III, Lemma 3.6]{Jacod}, and hence $M_t = 0$ for all $t \ge \tau$.
For models in duality,
\begin{equation}\label{eq:mm_dual}
\mm_t = 1 - \EE^{\QQ}(S_t) = 1 - \EE^{\PP}(\ind_{\{t< \tau\}}) = \PP(\tau\leq t) = \PP(M_t = 0),
\end{equation}
that is the martingale defect of $S$ (under $\QQ$) equals the mass at zero of $M$ (under $\PP)$.\\

The following result is deep, but by now well-understood. A proof can be found e.g. in~\cite{KardNik}; see also~\cite{Pal, Ruf} for similar versions.
\begin{lemma}
Let $(\Ff_t)_{t \geq 0}$ be right continuous and a standard system and let $T > 0$ be a fixed time horizon. 
For any pair $(S,\QQ)$ satisfying the assumptions of Definition~\ref{def:duality} there exists a dual pair $(M,\PP)$. Conversely, for any pair $(M,\PP)$ and associated stopping time $\tau$ satisfying the assumptions of Definition~\ref{def:duality}, there exists a dual pair $(S,\QQ)$. 
\end{lemma}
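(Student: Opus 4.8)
The plan is to build both implications on the Föllmer-measure construction, whose existence on a standard system is furnished by Parthasarathy's theorem on projective limits of measures. The right-continuity and standard-system hypotheses enter precisely at this extension step and essentially nowhere else: once the relevant measure has been produced, every martingale assertion follows formally from the abstract Bayes rule under the density process, together with the bookkeeping identity~\eqref{eq:mm_dual}.

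For the converse direction (constructing $(S,\QQ)$ from a given pair $(M,\PP)$), I would first set up the consistent family of measures $\QQ|_{\Ff_t} := M_t \cdot \PP|_{\Ff_t}$ for $t \le T$, noting that $M_0 = 1$ and that consistency for $s < t$ is immediate from the martingale identity $\EE^{\PP}[M_t \mid \Ff_s] = M_s$. Because the filtration is a standard system, this projective family extends to a single measure $\QQ$ on $\Ff_T$ with $\D\QQ/\D\PP|_{\Ff_t} = M_t$, and in particular $\D\QQ/\D\PP|_{\Ff_T} = M_T$ as required. Since $M_T = 0$ on $\{\tau \le T\}$, one reads off $\QQ(\tau \le T) = \EE^{\PP}[M_T \ind_{\{\tau \le T\}}] = 0$, so $M$ is strictly positive $\QQ$-almost surely on $[0,T]$ and $S := 1/M$ is well-defined and strictly positive there. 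To see that $S$ is a local $\QQ$-martingale I would invoke the Bayes rule under the density $M$: a process $N$ is a local $\QQ$-martingale if and only if $MN$ is a local $\PP$-martingale. Applying this to $N = 1/M$ reduces the claim to the triviality that $MN \equiv 1$ is a $\PP$-martingale, and~\eqref{eq:mm_dual} then identifies the mass at zero of $M$ with the martingale defect of $S$.

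For the direct implication (constructing $(M,\PP)$ from $(S,\QQ)$) the prospective density is now a strict local martingale, so it no longer defines a probability on each $\Ff_t$ directly, and the argument is correspondingly more delicate. I would fix a localising sequence $(\sigma_n)$ for $S$, build the consistent family through the genuine densities $S_{t \wedge \sigma_n}$, and pass to the projective limit to obtain $\PP$ on $\Ff_T$ satisfying $\D\PP/\D\QQ|_{\Ff_t} = S_t$ on the set where the reciprocal process has not exploded. The decisive feature is that the explosion time of $S$ --- equivalently the hitting time of zero of $M := 1/S$ --- now carries strictly positive $\PP$-mass exactly when $S$ is strict, and I would verify $\PP(\tau \le T) = 1 - \EE^{\QQ}[S_T] = \mm_T$. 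Setting $M := 1/S$, absorbed at zero after $\tau$, the Bayes rule with the roles of the two measures exchanged shows $M$ is a local $\PP$-martingale; the upgrade to a true martingale follows from the constant-expectation computation $\EE^{\PP}[M_t] = \EE^{\PP}[S_t^{-1}\ind_{\{t<\tau\}}] = \EE^{\QQ}[S_t^{-1} S_t] = 1 = M_0$ for a non-negative local martingale.

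I expect the genuine obstacle to lie entirely in the projective-limit extension and its interaction with the behaviour of the processes at their degeneracy time: checking that the consistent families above meet the hypotheses of Parthasarathy's theorem, that the limiting measure charges the explosion event correctly rather than leaking the lost mass $\mm_T$ to a spurious cemetery state, and that the resulting hitting time $\tau$ is indeed predictable with $\tau > 0$ as demanded by Definition~\ref{def:duality}. Once this measure-theoretic step is secured, the martingale verifications are purely formal consequences of the Bayes rule and dominated convergence; it is the subtle accounting at the explosion/absorption time where the standard-system hypothesis does its real work, and this is exactly the content that the cited references~\cite{KardNik, Pal, Ruf} supply.
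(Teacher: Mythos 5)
Note first that the paper does not prove this lemma at all: it is quoted from the literature, with full proofs attributed to~\cite{KardNik} (see also~\cite{Pal, Ruf}) and the remark that the difficult direction $(S,\QQ)\to(M,\PP)$ rests on F\"ollmer's exit measure~\cite{Follmer, Meyer}. Your overall architecture---projective-limit/F\"ollmer construction carrying the real weight in the forward direction, density change plus Bayes-type verifications elsewhere---is the same as in those references, and your closing paragraph defers exactly the points (extension on a standard system, correct charging of the explosion event, predictability of $\tau$) that the cited proofs actually settle. One superfluous step: for a \emph{fixed} horizon $T$ no projective limit is needed in the easy direction, since $M_T\geq 0$ with $\EE^{\PP}[M_T]=1$ lets you define $\QQ$ directly by $\D\QQ/\D\PP := M_T$ on $\Ff_T$; consistency with $M_t$ on each $\Ff_t$ \emph{is} the martingale property. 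The standard-system hypothesis is genuinely needed only in the forward direction, where the candidate densities $S_t$ do not integrate to one (and note that a standard system is incompatible with $\QQ$-augmented filtrations, a point your sketch passes over).

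There is, however, a genuine error in your converse direction: the asserted equivalence ``$N$ is a local $\QQ$-martingale if and only if $MN$ is a local $\PP$-martingale'', applied with ``$MN\equiv 1$''. Under the merely absolutely continuous (non-equivalent) change of measure this equivalence fails, precisely because the two measures disagree on $\{\tau\leq T\}$: this set is $\QQ$-null but carries $\PP$-mass $\mm_T>0$. Concretely, $MS$ is \emph{not} identically $1$ under $\PP$; rather $M_tS_t = \ind_{\{t<\tau\}}$ $\PP$-a.s., a nonconstant decreasing process, hence a strict $\PP$-supermartingale and not a local $\PP$-martingale whenever $\PP(\tau\leq T)>0$. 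Worse, if your reduction were valid it would prove too much: since $1$ is a \emph{true} $\PP$-martingale, the same Bayes transfer would make $S$ a true $\QQ$-martingale, contradicting Lemma~\ref{lem:duality_mass} in exactly the case of interest. The honest Bayes computation gives only $\EE^{\QQ}[S_t\mid\Ff_s] = \PP(t<\tau\mid\Ff_s)/M_s \leq S_s$, i.e.\ the supermartingale property, and the indicator you suppressed is the entire mechanism by which the strict local martingale defect arises. The correct argument localises \emph{before} $\tau$: set $\sigma_n := \inf\{t : M_t \leq 1/n\}$; since $\QQ(\tau > T)=1$ one has $\sigma_n \wedge T \uparrow T$ $\QQ$-a.s., and applying the Bayes formula on the stopped $\sigma$-algebras with the bounded-below densities $M_{\cdot\wedge\sigma_n}$ shows each stopped process $S^{\sigma_n}$ is a true $\QQ$-martingale. (By contrast, your forward-direction upgrade of $M$ to a true $\PP$-martingale via $\EE^{\PP}[M_t] = \QQ(t<\tau) = 1 = M_0$, together with the identification~\eqref{eq:mm_dual}, is correct.)
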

Let us remark that the difficult direction is going from $(S,\QQ)$ to $(M,\PP)$ in the case where $S$ is a strict local martingale; the required construction of $\PP$ relies on the F\"ollmer exit measure, first introduced in~\cite{Follmer} and mentioned in~\cite{Meyer}. Going from $(M,\PP)$ to $(S,\QQ)$ is easier, and an early proof under the assumption of continuity can be found in Delbaen and Schachermayer~\cite{DelScha}. 
For our purposes, the technical condition of $(\Ff_t)_{t\geq 0}$ being right continuous and a standard system can be satisfied by taking this filtration to be the right-continuous modification of the natural filtration of the coordinate process on the Skorokhod space $D(\RR_{\geq 0},\RR \cup \{+\infty\})$, i.e., the space of c\`adl\`ag functions with possible explosion to infinity in finite time. 
See discussions in~\cite{KardNik} and~\cite{Follmer} for details.
For models in duality we have the following simple characterisation of the strict local marginale property of $S$ under $\QQ$.

\begin{lemma}\label{lem:duality_mass}
Let $(M,\PP)$ and $(S,\QQ)$ be market models in duality with time horizon $T > 0$. The following are equivalent:
\begin{enumerate}[(i)]
\item $S$ is a strict local $\QQ$-martingale on $[0,T]$;
\item $M_T$ has mass at zero, i.e., $\PP(M_T = 0) > 0$;
\item $\mm_T > 0$;
\item $\QQ$ is not equivalent to $\PP$ on $\Ff_T$.
\end{enumerate}
\end{lemma}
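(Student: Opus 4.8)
The plan is to establish the four-way equivalence through three elementary links, all resting on the two defining relations of Definition~\ref{def:duality} together with the identity~\eqref{eq:mm_dual}. The logical skeleton I would use is (i)~$\Leftrightarrow$~(iii), (ii)~$\Leftrightarrow$~(iii) and (ii)~$\Leftrightarrow$~(iv), which together close the cycle.

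First I would dispatch (i)~$\Leftrightarrow$~(iii). Since $S$ is by hypothesis a local $\QQ$-martingale on $[0,T]$, the discussion following~\eqref{eq:mg_defect} in Section~\ref{sub:market} tells us that $S$ is a \emph{true} martingale on $[0,T]$ precisely when $\mm_T = 0$. As a strict local martingale is by definition a local martingale failing to be a true martingale, (i) holds if and only if $\mm_T > 0$, which is statement~(iii).

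Next, (ii)~$\Leftrightarrow$~(iii) is read off directly from~\eqref{eq:mm_dual}, which gives $\mm_T = \PP(M_T = 0)$ for models in duality; hence $\mm_T > 0$ if and only if $\PP(M_T = 0) > 0$.

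The only link demanding a genuine, if standard, measure-theoretic argument is (ii)~$\Leftrightarrow$~(iv), and this is where I expect to spend the most care. By Definition~\ref{def:duality} we have $\QQ \ll \PP$ on $\Ff_T$ with density $M_T \ge 0$, so the two measures are equivalent on $\Ff_T$ if and only if $\PP \ll \QQ$ as well. I would invoke the standard fact that, for an absolutely continuous change of measure, $\PP \ll \QQ$ holds if and only if the density is strictly positive $\PP$-almost surely: the event $\{M_T = 0\}$ satisfies $\QQ(M_T = 0) = \EE^{\PP}[M_T \ind_{\{M_T = 0\}}] = 0$, so $\PP \ll \QQ$ forces $\PP(M_T = 0) = 0$, while conversely $\PP(M_T = 0) = 0$ makes $\QQ(A) = \EE^{\PP}[M_T \ind_A] = 0$ imply $\PP(A) = 0$. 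Thus $\QQ$ fails to be equivalent to $\PP$ on $\Ff_T$ exactly when $\PP(M_T = 0) > 0$, i.e.\ statement~(ii). Combining the three links completes the proof. The main obstacle is merely bookkeeping: making sure that `equivalence on $\Ff_T$' is read as two-sided absolute continuity restricted to that $\sigma$-algebra, and that the density identity from the duality is applied on the correct $\sigma$-algebra rather than on the full filtration.
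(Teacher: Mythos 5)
Your proposal is correct and follows essentially the same route as the paper: the authors likewise obtain (i)$\Leftrightarrow$(iii) from $S$ being a local martingale by Definition~\ref{def:duality}, get (ii)$\Leftrightarrow$(iii) from Equation~\eqref{eq:mm_dual}, and settle (ii)$\Leftrightarrow$(iv) via the density $\left.\frac{\D\QQ}{\D\PP}\right|_{\Ff_T} = M_T$. The only difference is that you spell out the two-sided absolute continuity argument for the last step, which the paper states without detail; your expansion is accurate.
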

\begin{proof}
From Definition~\ref{def:duality} it is clear that $S$ is a local martingale, hence the strict local martingale property of $S$ is equivalent to $\mm_T = 1 - \EE^{\QQ}(S_T) > 0$. 
In view of Equation~\eqref{eq:mm_dual} this shows equivalence of the first three assertions. 
Finally $\left.\frac{\D\QQ}{\D\PP}\right|_{\Ff_T} = M_T$ and hence~$\QQ$ is not equivalent to~$\PP$ 
if and only if $M_T = 0$ with positive $\PP$-probability, which is exactly assertion~(ii). 
\end{proof}

\begin{example}\label{ex:diffusion2}
We now continue Example~\ref{ex:diffusion}, i.e., we consider $M$ given by 
$\D M_t = \sigma(M_t)\D W_t$, where~$\sigma$ satisfies the same conditions as before. 
In addition assume $\int_1^\infty x \sigma^{-2}(x) \D x < \infty$, 
such that $M$ is a true $\PP$-martingale by~\eqref{eq:integral2}, and let $\tau$ denote the first hitting time of zero of $M$. 
Construct now the measure~$\QQ$ via $\left.\frac{\D\QQ}{\D\PP}\right|_{\Ff_t} = M_t$, 
and set $S_t:=M_{t}^{-1}\ind_{\{t<\tau\}}$. It is easy to see that $S$
satisfies the stochastic differential equation
$\D S_t = \widetilde{\sigma}(S_t)\D W_t^{\QQ}$,
with $S_0=1$, $\widetilde{\sigma}(y) =  y^2\sigma(1/y)$, 
and where $W^{\QQ}$ is a $\QQ$-Brownian motion defined up to time $\tau$.
Note that the following equalities hold:
\begin{equation}\label{eq:example_diffusion}
\int_{0}^{1}\frac{y}{\widetilde{\sigma}^2(y)}\D y = \int_{1}^{\infty}\frac{x}{\sigma^2(x)}\D x
\qquad\text{and}\qquad
\int_{1}^{\infty}\frac{y}{\widetilde{\sigma}^2(y)}\D y = \int_{0}^{1}\frac{x}{\sigma^2(x)}\D x.
\end{equation}
The first integral is finite by assumption, and it follows from the integral conditions~\eqref{eq:integral1} and~\eqref{eq:integral2} in Example~\ref{ex:diffusion} that $S$ is $\QQ$-a.s. strictly positive. 
If the second integral is finite it follows from the same conditions that also~$M$ is $\PP$-a.s. positive and~$S$ is a true $\QQ$-martingale. 
If the second integral is infinite, then $M$ has mass at zero and $S$ is a strict local martingale. Other cases are not possible, in line with Lemma~\ref{lem:duality_mass}.
\end{example}

Relations between option prices in dual models are well known in the case where 
both~$S$ and~$M$ are true martingales without mass at zero~\cite{Eberlein, Geman})
and have been described in the strict local martingale case in~\cite{KardNik, Ruf}. 
We recall the relation for European Puts and Calls. 
In addition to the Put and Call prices $P_S$ and $C^\alpha_S$ described above, we write 
$P_M(x) := \EE^{\PP}( \E^x - M_T)_+$ and $C_M(x) := \EE^{\PP}(M_T - \E^x)_+$ for the European Put and Call option on the underlying~$M$ under $\PP$.
\begin{proposition}\label{prop:PricesStrictLocal}
Let $(M,\PP)$ and $(S,\QQ)$ be market models in duality with time horizon $T > 0$. 
Then for any $x\in\RR$ and any $\alpha\in [0,1]$, the following relations between Put and Call prices hold:
\begin{equation}\label{eq:CallPutStrict}
C_S^\alpha(x) = \E^{x}P_M(-x) + (\alpha-1) \mm_T
\qquad\text{and}\qquad
P_S(x) = \E^{x}C_M(-x).
\end{equation}
\end{proposition}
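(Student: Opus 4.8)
The plan is to exploit the change of measure $\left.\frac{\D\QQ}{\D\PP}\right|_{\Ff_T} = M_T$ directly, using the defining relation $S_t = 1/M_t$ on $\{t < \tau\}$. The key observation is that all of the relevant payoffs vanish on the event $\{M_T = 0\}$ (i.e.\ on $\{\tau \le T\}$) in a way that makes the measure change transparent, so the entire proof reduces to a careful accounting of what happens on $\{\tau \le T\}$ versus $\{\tau > T\}$. I would handle the Put on $S$ first, since~\eqref{eq:CallPutStrict} shows it involves no martingale-defect correction and is therefore the cleaner of the two identities.

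First I would write $P_S(x) = \EE^{\QQ}(\E^x - S_T)_+$ and convert the $\QQ$-expectation into a $\PP$-expectation by inserting the Radon--Nikodym density: $P_S(x) = \EE^{\PP}\big[M_T (\E^x - S_T)_+\big]$. On $\{\tau > T\}$ we have $S_T = 1/M_T$ with $M_T > 0$, so that $M_T(\E^x - S_T)_+ = M_T(\E^x - 1/M_T)_+ = (\E^x M_T - 1)_+ = \E^x(M_T - \E^{-x})_+$. On the complementary event $\{\tau \le T\}$ we have $M_T = 0$, so the density $M_T$ kills the integrand and the contribution is zero; simultaneously $(M_T - \E^{-x})_+ = (-\E^{-x})_+ = 0$ there as well, so the expression $\E^x(M_T - \E^{-x})_+$ is consistent across both events. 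Hence $P_S(x) = \EE^{\PP}\big[\E^x (M_T - \E^{-x})_+\big] = \E^x C_M(-x)$, which is the second identity. The slightly delicate point, worth a sentence, is that the substitution $S_T = 1/M_T$ is only valid $\PP$-a.s.\ on $\{T < \tau\}$, and one must check that the boundary behaviour at $\tau$ is handled by the vanishing of $M_T$ rather than by any subtle contribution from the density.

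For the Call identity I would use the already-established Put formula together with the restored Put--Call parity~\eqref{eq:pc_parity_full} and the collateralisation relation~\eqref{eq:collateralized_call}, rather than repeating the measure change. From~\eqref{eq:pc_parity} we have $C_S(x) = P_S(x) + 1 - \E^x - \mm_T$, so using $P_S(x) = \E^x C_M(-x)$ gives $C_S(x) = \E^x C_M(-x) + 1 - \E^x - \mm_T$. The classical Put--Call parity for the \emph{true} martingale $M$ reads $C_M(-x) - P_M(-x) = \EE^{\PP}(M_T) - \E^{-x} = 1 - \E^{-x}$, so $\E^x C_M(-x) = \E^x P_M(-x) + \E^x - 1$. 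Substituting this cancels the $1 - \E^x$ terms and yields $C_S(x) = \E^x P_M(-x) - \mm_T$. Finally, by~\eqref{eq:collateralized_call}, $C_S^\alpha(x) = C_S(x) + \alpha \mm_T = \E^x P_M(-x) + (\alpha - 1)\mm_T$, which is the first identity.

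I expect the main obstacle to be purely bookkeeping rather than conceptual: namely, ensuring that the identity $M_T(\E^x - 1/M_T)_+ = \E^x(M_T - \E^{-x})_+$ is justified not just pointwise on $\{M_T > 0\}$ but in a manner that correctly absorbs the $\{M_T = 0\}$ event under the density $M_T$. Because $M$ is a genuine $\PP$-martingale with an absorbing state at zero (as noted via~\cite[Chapter III, Lemma 3.6]{Jacod}), the mass $\PP(M_T = 0) = \mm_T$ contributes nothing to $\EE^{\PP}[M_T (\cdots)]$, so the change of measure loses no mass and the algebra goes through cleanly; making this step rigorous is the only place where the strict-local-martingale subtlety actually enters.
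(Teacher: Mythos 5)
Your proof is correct and takes essentially the same route as the paper: you establish the Put identity by inserting the density $M_T$ under $\PP$ and noting that absorption of $M$ at zero makes the event $\{\tau \le T\}$ contribute nothing, and you then deduce the Call identity from the modified parity~\eqref{eq:pc_parity} for $S$, classical Put--Call parity for the true $\PP$-martingale $M$, and the definition~\eqref{eq:collateralized_call} of $C_S^\alpha$, exactly as the paper does. The only cosmetic difference is that the paper writes the stopped process $M_T^\tau$ (which equals $M_T$ since zero is absorbing) and carries the indicator decomposition $\ind_{\{T<\tau\}} + \ind_{\{T\geq\tau\}}$ explicitly, whereas you argue the same cancellation in words.
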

\begin{proof}
For Puts we compute
\begin{align*}
P_S(x)
 & = \EE^{\QQ}(\E^{x} - S_T)_+
  = \EE^{\PP}\left[M_T^\tau \left(\E^{x} - S_T\right)_+\right]
 = \EE^{\PP}\left[M_T^\tau \left(\E^{x} - \frac{1}{M_T}\ind_{\{T<\tau\}}\right)_+\right]\\
 & = \E^{x}\EE^{\PP}\left[\left(\ind_{\{T<\tau\}}+\ind_{\{T\geq\tau\}}\right)\left(M_T^\tau - 
 \E^{-x}\ind_{\{T<\tau\}}\right)_+\right]\\
 & = \E^{x}\EE^{\PP}\left[\ind_{\{T<\tau\}}\left(M_{T} - \E^{-x}\right)_+\right]
  = \E^{x}\EE^{\PP}\left[\left(1 - \ind_{\{T\geq\tau\}}\right)
 \left(M_{T} - \E^{-x}\right)_+\right]
 = \E^{x}C_M(-x).
\end{align*}
The results for Calls follows from Put-Call parity for the $\PP$-martingale~$M$ and `modified Put-Call parity'~\eqref{eq:pc_parity} for the local $\QQ$-martingale~$S$.
\end{proof}


\section{Implied volatility for strict local martingales}
For each $x\in\RR$, the implied volatility for a given Call price~$C(x)$ 
is defined as the unique non-negative solution to the equation $C_{\BS}(x,\sigma) = C(x)$, 
where $C_{\BS}$ represents the Black-Scholes European Call price with maturity~$T$, strike~$\E^{x}$
and volatility~$\sigma$:
$$
C_{\BS}(x,\sigma) := \Nn\left(d_+(x,\sigma)\right)
 - \E^{x} \Nn\left(d_-(x,\sigma)\right),
\qquad\text{where }
d_{\pm}(x,\sigma) := \frac{-x}{\sigma\sqrt{T}} \pm \frac{1}{2}\sigma\sqrt{T},
$$
with $\Nn$ standing for the Gaussian cumulative distribution function.
It is known that the implied volatility is a well-defined real number in $[0,\infty)$ if and only if~$C(x)$ 
lies within the no-static-arbitrage bounds (given by the bounds~\eqref{eq:call_bounds} for $\mm_T = 0$) 
and that in this case it is unique. 
If Put-Call-parity holds then the definition using European Put options is equivalent to that using Call options . 
Hence in true martingale models, the implied volatility is always uniquely defined and there is no distinction between Call- and Put-implied volatility. 
The behaviour of the implied volatility in such models is by now fairly well understood, see e.g.~\cite{GatheralBook}. However, in the strict local martingale case, surprisingly few results exist
(apart from~\cite{Tehranchi}, which studies the large-time behaviour). 
It turns out that in the strict local martingale setting, 
even existence of implied volatilities is not certain and one cannot equivalently consider Call and Put options.
In this section, we endeavour to fill this gap by providing results on existence, uniqueness and on the asymptotic behaviour for large strikes of implied volatilities in the class of strict local martingale models.


\subsection{Put-and Call-implied volatility}
As discussed above, Put-Call parity fails in the strict local martingale setting (unless Calls are fully collateralised), 
and hence Call-implied volatilities have to be distinguished from Put-implied volatilities. 
We denote by $I^p_S(x)$ the implied volatility corresponding to the price $P_S(x)$ of a Put with log-strike $x$, 
written on a local martingale~$S$, and denote by~$I^\alpha_S(x)$ the implied volatility corresponding to the $\alpha$-collateralised Call~$C^\alpha_S(x)$, 
namely, for each $x\in\RR$, the unique non-negative solution (whenever it exists) to the equation
$C_{\BS}(x, I_S^\alpha(x)) = C_S^\alpha(x)$.
We start with the following result discussing the existence of~$I^p_S$ and $I^\alpha_S$.
\begin{theorem}\label{thm:ExistenceIV}
Let~$S$ be a non-negative local martingale. 
\begin{enumerate}[(i)]
\item The implied volatility $I^p_S$ of the Put $P_S$ is well defined on the whole real line;
\item The implied volatility $I^1_S$ of the fully collateralised Call $C^1_S$ is well defined 
on~$\RR$ and coincides with the Put-implied volatility: $I^1_S(x) = I^p_S(x)$,
for all $x \in \RR$;
\item For $\alpha \in [0,1)$ there exists~$x^*(\alpha)\leq0$ such that the implied volatility~$I^\alpha_S$ of the $\alpha$-collateralised Call is well defined on~$[x^*(\alpha),+\infty)$, 
but not on~$(-\infty, x^*(\alpha))$. 
The function $x_*(\alpha)$ is strictly decreasing and satisfies
\begin{equation}\label{eq:alpha_bounds}
\log\left((1-\alpha\right)\mm_T) < x^*(\alpha) \le \log(1 - \alpha \mm_T).
\end{equation}
For every $x \in \RR$ the function $\alpha \mapsto I^\alpha_S(x)$ is strictly increasing on the interval where it is defined, and $I_S^\alpha(x) < I_S^p(x)$ holds for all $\alpha \in [0,1)$ 
and $x \in [x^*(\alpha),+\infty)$.
\end{enumerate}
\end{theorem}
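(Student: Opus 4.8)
The plan is to reduce everything to the elementary monotonicity and range of the Black--Scholes price in its volatility argument, combined with the Put and Call bounds \eqref{eq:call_bounds}--\eqref{eq:put_bounds} and the parity relations \eqref{eq:pc_parity} and \eqref{eq:pc_parity_full}. First I would record the basic fact that, for fixed $x$, the map $\sigma\mapsto C_{\BS}(x,\sigma)$ is continuous and strictly increasing, with $\lim_{\sigma\downarrow 0}C_{\BS}(x,\sigma)=(1-\E^x)_+$ and $\lim_{\sigma\uparrow\infty}C_{\BS}(x,\sigma)=1$; hence it is a bijection from $(0,\infty)$ onto $\left((1-\E^x)_+,1\right)$, and a Call-implied volatility in $[0,\infty)$ exists and is unique if and only if the Call price lies in $\left[(1-\E^x)_+,1\right)$, the value $0$ being attained exactly at the lower endpoint. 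Via Black--Scholes Put--Call parity $P_{\BS}(x,\sigma)=C_{\BS}(x,\sigma)-(1-\E^x)$, the companion map $\sigma\mapsto P_{\BS}(x,\sigma)$ is a strictly increasing bijection onto $\left((\E^x-1)_+,\E^x\right)$, so the Put-implied volatility exists if and only if $P_S(x)\in\left[(\E^x-1)_+,\E^x\right)$.

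For (i) and (ii): the Put bounds \eqref{eq:put_bounds} give $(\E^x-1)_+\le(\E^x-1+\mm_T)_+\le P_S(x)<\E^x$ for every $x$, so $P_S(x)$ always lies in the admissible range and $I^p_S$ is defined on all of $\RR$. For the fully collateralised Call I would observe that \eqref{eq:pc_parity_full} gives $C^1_S(x)-P_S(x)=1-\E^x$, exactly the same parity relation as in the Black--Scholes model; hence if $\sigma$ solves $P_{\BS}(x,\sigma)=P_S(x)$ then the same $\sigma$ solves $C_{\BS}(x,\sigma)=C^1_S(x)$, which together with (i) proves that $I^1_S$ is defined on $\RR$ and $I^1_S(x)=I^p_S(x)$ there.

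For (iii) I would first locate the existence region of $I^\alpha_S$. The upper constraint never binds, since $C^\alpha_S(x)=C_S(x)+\alpha\mm_T\le(1-\mm_T)+\alpha\mm_T=1-(1-\alpha)\mm_T<1$. The lower constraint $C^\alpha_S(x)\ge(1-\E^x)_+$ is automatic for $x\ge0$; for $x<0$ I would substitute $C_S(x)=P_S(x)+1-\E^x-\mm_T$ from \eqref{eq:pc_parity} to rewrite it equivalently as $P_S(x)\ge(1-\alpha)\mm_T$. Since $P_S$ is continuous and non-decreasing with $P_S(-\infty)=0$ and $P_S(0)\ge\EE^{\QQ}(1-S_T)=\mm_T\ge(1-\alpha)\mm_T$, the threshold $x^*(\alpha):=\inf\{x:P_S(x)\ge(1-\alpha)\mm_T\}$ satisfies $x^*(\alpha)\le0$, obeys $P_S(x^*(\alpha))=(1-\alpha)\mm_T$ by the intermediate value theorem, and the existence region is exactly $[x^*(\alpha),+\infty)$ with $I^\alpha_S(x^*(\alpha))=0$.

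It then remains to prove the quantitative and monotonicity statements. The bounds \eqref{eq:alpha_bounds} follow by evaluating the Put bounds at $x^*(\alpha)$: the strict upper bound $P_S<\E^x$ gives $\E^{x^*(\alpha)}>(1-\alpha)\mm_T$, while the lower bound $P_S(x)\ge\E^x-1+\mm_T$ gives $\E^{x^*(\alpha)}\le1-\alpha\mm_T$. Strict monotonicity of $\alpha\mapsto x^*(\alpha)$ follows from the strict monotonicity of $\alpha\mapsto(1-\alpha)\mm_T$ together with the monotonicity of $P_S$, since equal values of $P_S$ at two distinct thresholds are impossible and monotonicity then fixes the order. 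Finally, for fixed $x$ the price $C^\alpha_S(x)=C_S(x)+\alpha\mm_T$ is strictly increasing in $\alpha$, and since $C_{\BS}(x,\cdot)$ is strictly increasing the implied volatility $\alpha\mapsto I^\alpha_S(x)$ inherits strict monotonicity; combined with (ii) this yields $I^\alpha_S(x)<I^1_S(x)=I^p_S(x)$ for all $\alpha\in[0,1)$. I expect the main obstacle to be the careful treatment of the threshold $x^*(\alpha)$---showing that the existence region is precisely the closed half-line and that $x^*(\alpha)$ is strictly monotone---since this requires combining continuity, monotonicity and possible flat stretches of $P_S$, as well as the correct handling of the boundary case of zero implied volatility.
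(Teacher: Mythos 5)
Your proposal is correct and follows essentially the same route as the paper: parts (i)--(ii) via the bounds and Put--Call parity are identical, and for (iii) your criterion $P_S(x)\ge(1-\alpha)\mm_T$ is just an affine reformulation of the paper's function $F^\alpha(x)=C_S^\alpha(x)-(1-\E^x)_+=\EE^{\QQ}\left(\max(S_T,\E^x)\right)+\alpha\mm_T-1$, since $F^\alpha(x)=P_S(x)-(1-\alpha)\mm_T$ for $x\le 0$, with the same infimum definition of $x^*(\alpha)$, the same monotonicity argument, and the bounds~\eqref{eq:alpha_bounds} obtained by evaluating the Put bounds at $x^*(\alpha)$ rather than the equivalent Call bounds. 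The only (harmless) additions are that you spell out the final strict-monotonicity claims for $\alpha\mapsto I_S^\alpha(x)$, which the paper leaves implicit.
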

\begin{remark}
Reparameterising by log-strike and setting $K^*(\alpha) = \exp(x^*(\alpha))$ the bounds in~\eqref{eq:alpha_bounds} simplify to $(1 - \alpha) \mm_T < K^*(\alpha) \le (1 - \alpha \mm_T)$. 
Even without specifying a concrete model for $S$, 
the region $\Dd := \{(\alpha,x): \alpha \in [0,1), x \in \RR\}$
can be written as the disjoint union
$\Dd = \Dd^A \cup \Dd^N \cup \Dd^M$, where
\begin{align*}
\Dd^A &:= \{(\alpha,x) \in \Dd: x > \log(1 - \alpha \mm_T)\},\\
\Dd^N &:= \{(\alpha,x) \in \Dd: x \le \log((1 - \alpha) \mm_T)\},\\
\Dd^M &:= \{(\alpha,x) \in \Dd: \log((1 - \alpha) \mm_T) < x \le \log((1 - \alpha) \mm_T)\},
\end{align*}
such that the implied volatility $I_S^\alpha(x)$ is \emph{a}lways defined in $\Dd^A$, 
\emph{n}ever in $\Dd^N$ and \emph{m}ay or may not be in $\Dd^M$. It will also become clear from the proof that the region 
where $I_S^\alpha(x)$ is not defined is precisely the region where the Call price $C_S^\alpha(x)$ violates the lower no-static-arbitrage bound. 
\end{remark}

\begin{proof}
The price of the Put and the price of the fully collateralised Call are always inside the no-static-arbitrage region by~\eqref{eq:call_bounds} and~\eqref{eq:put_bounds}. 
Hence, the corresponding implied volatilities $I_S^p(x)$ and $I_S^1(x)$ are well defined for all $x \in \RR$. 
The Put-Call parity~\eqref{eq:pc_parity_full} for the fully-collateralised Call and for the Black-Scholes price yields
$$
P_S(x) = C_S^1(x) + \E^x - 1 = C_\text{BS}(x,I_S^1(x)) + \E^x - 1= P_\text{BS}(x,I_S^1(x)).
$$
Since $P_S(x) = P_\text{BS}(x,I_S^p(x))$, uniqueness of implied volatility implies that 
$I_S^1(x) = I_S^p(x)$ for all $x \in \RR$,
and Claims~(i)~and~(ii) follow.\\
Let now $\alpha \in [0,1)$. The implied volatility $I^\alpha(x)$  for the Call $C_S^\alpha(x)$ exists if and only if $C_S^\alpha(x)$ is inside the no-static-arbitrage region $[(1-\E^x)_+,1)$. 
From~\eqref{eq:call_bounds} we derive the bounds 
\begin{equation}\label{eq:alpha_call_bounds}
(1 - \mm_T - \E^x)_+  + \alpha \mm_T \le C_S^\alpha(x) < 1 + (\alpha - 1)\mm_T.
\end{equation}
Thus, $C_S^\alpha(x)$ is in the no-static-arbitrage region if and only if $C_S^\alpha(x) \ge (1 - \E^x)_+$ or equivalently if $F^\alpha(x) := C_S^\alpha(x) - (1 - \E^x)_+ \ge 0$. 
To find the zeros of $F^\alpha(x)$ is suffices to consider $x < 0$ since $F^\alpha(x) > 0$ for all $x \ge 0$. Rewriting $F^\alpha$ for $x \le 0$ as 
\[
F^\alpha(x) = \EE^{\QQ}(S_T - \E^x)_+ + \alpha \mm_T - (1 - \E^x) = \EE^{\QQ}\left(\max(S_T,\E^x)\right)  +\alpha \mm_T - 1,
\]
we see that $F^\alpha$ is continuous and increasing on $(-\infty,0]$ with 
$\lim_{x \downarrow -\infty} F^\alpha(x) = \mm_T(1 - \alpha)$. 
Setting $x^*(\alpha) := \inf \{x\leq 0: F^\alpha(x) \ge 0\}$, it follows that implied volatility exists on $[x^*(\alpha),\infty)$ but not on $(-\infty,x_*(\alpha))$. 
It is also clear that for fixed $x$ the function $\alpha \mapsto F^\alpha(x)$ is strictly increasing, and hence that $x^*(\alpha)$ must be strictly decreasing. 
Moreover, considering the left limit of $F^\alpha(x)$ at $-\infty$ it follows that $x^*(\alpha) = -\infty$ for $\alpha = 1$ and $x^*(\alpha) > -\infty$ for all other $\alpha \in [0,1)$. 
In the latter case it holds that $C_S^\alpha(x^*(\alpha)) = 1- \E^{x^*(\alpha)}$. 
Plugging the right-hand side into the bounds~\eqref{eq:alpha_call_bounds} and rearranging we obtain~\eqref{eq:alpha_bounds}.
\end{proof}



\subsection{Asymptotic behaviour of the implied volatility}

For large strikes, the following result provides the asymptotic behaviour of Put-and Call-implied volatilities.
\begin{theorem}\label{thm:IVExpansion}
Let $S$ be a non-negative strict local martingale with martingale defect $\mm_T$ 
and suppose that $\alpha > 0$. Then, as~$x$ tends to infinity, the following expansions hold:
$$
I^p_S(x) = I^1_S(x) = \sqrt{\frac{2x}{T}} + \frac{\Nn^{-1}(\mm_T)}{\sqrt{T}} + o(1) \qquad\text{and}\qquad
I^\alpha_S(x) = \sqrt{\frac{2x}{T}} + \frac{\Nn^{-1}(\alpha\mm_T)}{\sqrt{T}} + o(1).
$$
\end{theorem}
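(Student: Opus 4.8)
The plan is to reduce the whole statement to a single Black--Scholes asymptotic and then invert it using monotonicity. Since by Theorem~\ref{thm:ExistenceIV}(ii) we already know $I^1_S = I^p_S$, it suffices to establish the expansion for the $\alpha$-collateralised Call for every $\alpha\in(0,1]$ and to read off the Put/fully-collateralised statement by setting $\alpha=1$. The two facts I would lean on at the outset are: first, the large-strike limit $\lim_{x\uparrow\infty}C_S^\alpha(x)=\alpha\mm_T$ from~\eqref{eq:large_strike}; and second, that because $S$ is a strict local martingale (so $0<\mm_T<1$) and $\alpha>0$, the limiting price $\alpha\mm_T$ lies in the open interval $(0,1)$, so that $\Nn^{-1}(\alpha\mm_T)$ is a genuine finite real number. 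This is exactly where the hypothesis $\alpha>0$ is used: for $\alpha=0$ the limiting price collapses to $0$, $\Nn^{-1}(0)=-\infty$, and the expansion degenerates (one is then back in the regime governed by Lee and Benaim--Friz).

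The key lemma I would prove is that, for a fixed constant $b\in\RR$ and the candidate volatility $I_b(x):=(\sqrt{2x}+b)/\sqrt{T}$, one has $C_{\BS}(x,I_b(x))\to\Nn(b)$ as $x\uparrow\infty$. Inserting $I_b(x)\sqrt{T}=\sqrt{2x}+b$ into $d_{\pm}$ and expanding in powers of $x^{-1/2}$ gives $d_+(x,I_b(x))=b+O(x^{-1/2})\to b$ and $d_-(x,I_b(x))=-\sqrt{2x}+O(x^{-1/2})$, hence $\Nn(d_+)\to\Nn(b)$. For the remaining term I would push the expansion of $d_-$ one order further to obtain $d_-^2/2 = x + O(1)$, so that the Gaussian tail (Mills ratio) bound $\Nn(d_-)\le (\sqrt{2\pi}\,|d_-|)^{-1}\E^{-d_-^2/2}$ yields $\E^x\,\Nn(d_-(x,I_b(x)))\le (\sqrt{2\pi}\,|d_-|)^{-1}\E^{x-d_-^2/2}=O(x^{-1/2})\to 0$. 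Together these give $C_{\BS}(x,I_b(x))=\Nn(d_+)-\E^x\Nn(d_-)\to\Nn(b)$.

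With the lemma in hand I would finish by a squeeze argument exploiting that $C_{\BS}(x,\cdot)$ is strictly increasing. Set $b_\infty:=\Nn^{-1}(\alpha\mm_T)$ and fix $\delta>0$. Since $C_{\BS}(x,I_{b_\infty+\delta}(x))\to\Nn(b_\infty+\delta)>\alpha\mm_T=\lim_{x\uparrow\infty}C_S^\alpha(x)$, for all large $x$ we have $C_{\BS}(x,I_{b_\infty+\delta}(x))>C_S^\alpha(x)=C_{\BS}(x,I_S^\alpha(x))$, and monotonicity forces $I_S^\alpha(x)<I_{b_\infty+\delta}(x)$, i.e. $I_S^\alpha(x)\sqrt{T}-\sqrt{2x}<b_\infty+\delta$. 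The symmetric comparison with $I_{b_\infty-\delta}(x)$ gives the reverse inequality, so letting $\delta\downarrow 0$ yields $I_S^\alpha(x)\sqrt{T}-\sqrt{2x}\to b_\infty$, which is precisely $I_S^\alpha(x)=\sqrt{2x/T}+\Nn^{-1}(\alpha\mm_T)/\sqrt{T}+o(1)$; setting $\alpha=1$ and invoking $I_S^1=I_S^p$ recovers the first expansion.

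I expect the only delicate point to be the key lemma, and within it the control of $\E^x\Nn(d_-)$: one must expand $d_-^2$ to the constant order (so that $d_-^2=2x+O(1)$) in order for the exponential growth $\E^x$ to be cancelled exactly, after which the polynomial Mills-ratio factor $1/|d_-|\sim 1/\sqrt{2x}$ supplies the decay to zero. Everything outside this estimate is routine monotonicity bookkeeping.
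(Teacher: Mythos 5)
Your proposal is correct, and it differs from the paper's proof in the inversion step, so a comparison is worthwhile. Both arguments rest on the same two inputs: the large-strike price limit $\lim_{x\uparrow\infty}C^\alpha_S(x)=\alpha\mm_T$ from \eqref{eq:large_strike} and the reduction $I^p_S=I^1_S$ from Theorem~\ref{thm:ExistenceIV}, and both must show the term $\E^x\Nn(d_-)$ is negligible via a Mills-ratio estimate. The paper, however, works directly along the implied-volatility path: it first deduces $\lim_{x\uparrow\infty}d_+(x,I^\alpha_S(x))=\Nn^{-1}(\alpha\mm_T)$ and then inverts the defining quadratic, sandwiching $\sigma\sqrt{T}$ between $l+\sqrt{l^2+2x}$ and $u+\sqrt{u^2+2x}$ whenever $l\le d_+(x,\sigma)\le u$. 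You never touch $d_\pm$ along the actual smile: your key lemma $C_{\BS}(x,I_b(x))\to\Nn(b)$ for the deterministic candidates $I_b(x)=(\sqrt{2x}+b)/\sqrt{T}$, combined with strict monotonicity of $C_{\BS}(x,\cdot)$ in the volatility (positive vega), squeezes $I^\alpha_S(x)\sqrt{T}-\sqrt{2x}$ into $(b_\infty-\delta,\,b_\infty+\delta)$. Your route is more modular---the plug-in lemma is reusable and you avoid inverting $\Nn$ along the implied-volatility sequence---while the paper's route buys two things. First, the point you flag as delicate is actually automatic: by the AM--GM inequality, $d_-(x,\sigma)\le-\sqrt{2x}$ holds uniformly in $\sigma>0$, so $\E^x\Nn(d_-)\le\E^x\phi(\sqrt{2x})/\sqrt{2x}=(2\sqrt{\pi x})^{-1}\to0$ with no expansion of $d_-^2$ to constant order required; indeed a short computation shows your candidates satisfy $d_-(x,I_b(x))=-\sqrt{2x}-b^2/(2(\sqrt{2x}+b))\le-\sqrt{2x}$ exactly, so your explicit estimate, while correct, is more work than needed. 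Second, the paper's one-sided quadratic bound with arbitrary $u\in\RR$ also dispatches the degenerate case $\alpha\mm_T=0$ of Corollary~\ref{cor:IV} in the same breath, which your squeeze around a finite $b_\infty$ does not cover---though this is immaterial for the theorem as stated, since $\alpha>0$ and $\mm_T>0$ guarantee $\Nn^{-1}(\alpha\mm_T)$ is finite, exactly as you observe.
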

\begin{remark}
The $o(1)$ term can be made more precise with some further assumptions on the right tail of the distribution of $S_T$ along the lines of~\cite{DMHJ, Guli}. 
In fact, as we shall show in Section~\ref{sub:dualIV} under the mild additional assumption that $S$ is strictly positive 
the results of~\cite{DMHJ, Guli} can be directly translated into higher-order expansions of $I^p_S(x)$ and $I^\alpha_S(x)$ using the duality approach of Section~\ref{sub:duality}
\end{remark}
For the implied volatility of uncollateralised Calls and in true martingale models the following complementary result holds:
\begin{corollary}\label{cor:IV}
If $\alpha = 0$, then 
$$
\lim_{x \uparrow \infty} \left(I^0_S(x)  - \sqrt{\frac{2x}{T}} \right)  = -\infty.
$$
If $\mm_T = 0$, then, for all $\alpha \in [0,1]$,
$$
\lim_{x \uparrow \infty} \left(I^p_S(x)  - \sqrt{\frac{2x}{T}} \right)
 = \lim_{x \uparrow \infty} \left(I^\alpha_S(x)  - \sqrt{\frac{2x}{T}} \right)  = -\infty.
$$
\end{corollary}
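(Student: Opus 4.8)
The plan is to derive both limits from a single principle: whenever a Call price tends to zero at large strikes, its implied volatility falls below the critical slope $\sqrt{2x/T}$ by an amount diverging to $+\infty$. First I would reduce both statements to this principle. For the first statement, the limit relation~\eqref{eq:large_strike} gives $\lim_{x\uparrow\infty}C^0_S(x)=0$ regardless of the value of $\mm_T$, and by Theorem~\ref{thm:ExistenceIV}(iii) the implied volatility $I^0_S(x)$ is well defined for all large $x$. For the second, $\mm_T=0$ means $S$ is a true martingale, so the collateral correction $\alpha\mm_T$ vanishes, $C^\alpha_S(x)=C_S(x)=\EE^{\QQ}(S_T-\E^x)_+$ for every $\alpha\in[0,1]$, the implied volatilities all coincide ($I^\alpha_S=I^0_S=I^p_S$ via Theorem~\ref{thm:ExistenceIV}(ii)), and by dominated convergence (using $S_T\in L^1$) this common Call price tends to $0$. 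In both cases the relevant price $C(x)$ satisfies $C(x)\to 0$ and possesses a well-defined implied volatility $I(x)$ for all large $x$. It therefore suffices to show: if $C(x)\to 0$ as $x\uparrow\infty$, then $I(x)-\sqrt{2x/T}\to-\infty$.

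The core is a Black--Scholes estimate along the critical slope. Fixing $M>0$, I set $\sigma_M(x):=\sqrt{2x/T}-M/\sqrt{T}$, so that the total standard deviation is $\sigma_M(x)\sqrt{T}=\sqrt{2x}-M$. A direct computation of $d_\pm$ then yields $d_+(x,\sigma_M(x))\to -M$ and $d_-(x,\sigma_M(x))\to-\infty$ as $x\uparrow\infty$. The only delicate point is controlling the subtracted term $\E^x\Nn(d_-)$, and here I would use two elementary facts. The Black--Scholes symmetry $d_-^2-d_+^2=2x$ yields the exact identity $\E^x\Nn'(d_-)=\Nn'(d_+)$, where $\Nn'$ denotes the standard Gaussian density; combined with the Mills-ratio bound $\Nn(d_-)\le \Nn'(d_-)/|d_-|$ (valid for $d_-<0$), this gives
\begin{equation*}
0\le \E^x\Nn(d_-)\le \frac{\E^x\Nn'(d_-)}{|d_-|}=\frac{\Nn'(d_+)}{|d_-|}\longrightarrow 0,
\end{equation*}
since $\Nn'(d_+)\to\Nn'(-M)$ stays bounded while $|d_-|\to\infty$. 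Consequently $C_{\BS}(x,\sigma_M(x))=\Nn(d_+)-\E^x\Nn(d_-)\to\Nn(-M)>0$.

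The conclusion then follows by comparison. Since $C(x)\to 0$ while $C_{\BS}(x,\sigma_M(x))\to\Nn(-M)>0$, there exists $x_0$ such that $C(x)<C_{\BS}(x,\sigma_M(x))$ for all $x>x_0$. As $C_{\BS}(x,\cdot)$ is strictly increasing and $C(x)=C_{\BS}(x,I(x))$ by definition, monotonicity forces $I(x)<\sigma_M(x)$, that is $I(x)-\sqrt{2x/T}<-M/\sqrt{T}$ for all $x>x_0$. Since $M>0$ was arbitrary, this yields $I(x)-\sqrt{2x/T}\to-\infty$, establishing both claims simultaneously.

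I expect the identity $\E^x\Nn'(d_-)=\Nn'(d_+)$ to be the crucial simplification: it replaces any need for a precise Gaussian tail expansion of $\Nn(-\sqrt{2x})$ by an exact cancellation of the exponential factor $\E^x$, reducing the matter to the trivial observation that $\Nn'(d_+)$ stays bounded while $|d_-|$ diverges. A shorter but less unified alternative for the first statement alone would be to combine the monotonicity of $\alpha\mapsto I^\alpha_S(x)$ from Theorem~\ref{thm:ExistenceIV}(iii) with Theorem~\ref{thm:IVExpansion} and the fact that $\Nn^{-1}(\alpha\mm_T)\to-\infty$ as $\alpha\downarrow 0$; however, since Theorem~\ref{thm:IVExpansion} presupposes $\mm_T>0$, this route does not reach the true-martingale case $\mm_T=0$, so I would favour the direct comparison above.
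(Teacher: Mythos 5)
Your proposal is correct, and it takes a genuinely different route from the paper's. In the paper, Corollary~\ref{cor:IV} has no standalone proof: it is folded into the proof of Theorem~\ref{thm:IVExpansion}. There one first establishes $\lim_{x\uparrow\infty}d_+(x,I^\alpha_S(x))=\Nn^{-1}(\alpha\mm_T)$, killing the term $\E^x\Nn(d_-)$ via the arithmetic-geometric-mean bound $d_-(x,\sigma)\le-\sqrt{2x}$ (which holds for \emph{every} $\sigma$, so no test curve is needed) together with the Mills-ratio estimate; the degenerate case $\alpha\mm_T=0$, i.e.\ $d_+(x,I^\alpha_S(x))\to-\infty$, is then converted into the claim through the upper half of the quadratic-inversion bound~\eqref{eq:d1_bounds}: $d_+\le u$ eventually for every $u\in\RR$ gives $\limsup_{x\uparrow\infty}\bigl(I^\alpha_S(x)\sqrt{T}-\sqrt{2x}\bigr)\le u$, and $u$ is arbitrary. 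You instead isolate the clean transfer principle ``$C(x)\to0$ forces $I(x)-\sqrt{2x/T}\to-\infty$'' and prove it by comparing prices along the test curve $\sigma_M(x)=\sqrt{2x/T}-M/\sqrt{T}$, where $d_+\to-M$, the exact identity $\E^x\Nn'(d_-)=\Nn'(d_+)$ (replacing the AGM bound) plus Mills' ratio yields $C_{\BS}(x,\sigma_M(x))\to\Nn(-M)>0$, and monotonicity of $C_{\BS}(x,\cdot)$ in the volatility pins $I(x)<\sigma_M(x)$ eventually. Your reductions are also sound: $C^0_S(x)\to0$ by~\eqref{eq:large_strike}, with existence of $I^0_S(x)$ for all large $x$ from Theorem~\ref{thm:ExistenceIV}(iii), and for $\mm_T=0$ all the prices $C^\alpha_S$ coincide with $C_S$, so the implied volatilities (including the Put one, by parity) coincide and tend to the same zero limit. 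What the paper's route buys is economy: once~\eqref{eq:d1_limit} and~\eqref{eq:d1_bounds} are in hand for the theorem, the corollary costs two lines, and the bounds act directly on the unknown implied volatility without any comparison function. What your route buys is modularity and reusability: the comparison argument is self-contained, never inverts $\Nn$ at the boundary value $0$, and the same sandwich with $M$ chosen just above and below $-\Nn^{-1}(\alpha\mm_T)$ would re-derive the first-order expansion of Theorem~\ref{thm:IVExpansion} itself. Your closing caveat about the alternative argument via monotonicity in $\alpha$ is also correctly placed: that shortcut genuinely cannot reach the true-martingale case $\mm_T=0$.
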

\begin{remark}
Together, Theorem~\ref{thm:IVExpansion} and Corollary~\ref{cor:IV} show that there is a sharp distinction between the behaviour of the implied volatility for large strikes in strict local martingale models and in true martingale models. This observation can be used to detect the strict local martingale property from observed implied volatilities, see Section~\ref{sub:test} below.
\end{remark}

\begin{proof}
By Theorem~\ref{thm:ExistenceIV}, $I^p_S(x) = I^1_S(x)$, so that it suffices to consider the Call-implied volatility $I^\alpha_S(x)$ for $\alpha \in [0,1]$. 
By the same result, $I^\alpha_S(x)$ is well-defined at least for all $x \in (0,\infty)$. By~\eqref{eq:large_strike}, $I^\alpha_S(x)$ must satisfy
\[
 \lim_{x \uparrow \infty} \Nn\Big(d_+(x,I^\alpha_S(x))\Big) - \E^x \Nn\Big(d_-(x,I^\alpha_S(x))\Big)  = 
\lim_{x \uparrow \infty} C_\text{BS}(x,I^\alpha_S(x)) = \lim_{x \uparrow \infty} C^\alpha_S(x) =  \alpha \mm_T.
\]
The arithmetic-geometric-mean inequality yields $d_-(x,\sigma) \le - \sqrt{2x}$, and hence
\[
\lim_{x \uparrow \infty} \E^x \Nn\Big(d_-(x,I^\alpha_S(x))\Big) \le \lim_{x \uparrow \infty} \E^x \Nn\Big(- \sqrt{2x}\Big) \le \lim_{x \uparrow \infty} \E^x \frac{\phi(\sqrt{2x})}{\sqrt{2x}} = 0,
\]
where we have used the classical bound $\Nn(-x) / \phi(x) \le x^{-1}$ on Mills ratio. Hence 
\[
 \lim_{x \uparrow \infty} \Nn\Big(d_+(x,I^\alpha_S(x))\Big) = \lim_{x \uparrow \infty} C^\alpha_S(x) =  \alpha \mm_T,
\]
and inverting $\Nn$ we obtain 
\begin{equation}\label{eq:d1_limit}
\lim_{x \uparrow \infty} d_+(x,I^\alpha_S(x)) = \Nn^{-1}(\alpha \mm_T),
\end{equation}
where we set $\Nn^{-1}(0) = -\infty$ and $\Nn^{-1}(1) = +\infty$.

Suppose now, that for $x > 0$ and $\sigma \ge 0$ we have lower and upper bounds
$$
l \le d_+(x,\sigma) \le u,
$$
for some $l,u \in \RR$. 
Using the explicit form of $d_+$ and solving the quadratic equation, this implies that
\begin{equation}\label{eq:d1_bounds}
l + \sqrt{l^2 + 2x} \le \sigma \sqrt{T} \le u  + \sqrt{u^2 + 2x}.
\end{equation}
If $\alpha \mm_T > 0$, then combining this estimate with~\eqref{eq:d1_limit} and expanding the square root, we obtain
\[I^\alpha_S(x) = \sqrt{\frac{2x}{T}} + \frac{\Nn^{-1}(\alpha\mm)}{\sqrt{T}} + o(1)\]
for large~$x$ and Theorem~\ref{thm:IVExpansion} follows. 
For Corollary~\ref{cor:IV} we have to consider the degenerate case $\alpha \mm_T = 0$, or equivalently $\Nn^{-1}(\alpha \mm_T) = -\infty$. 
In this case, we can still apply the upper bound in~\eqref{eq:d1_bounds}, which holds for arbitrary $u \in \RR$. Hence, 
\[\lim_{x \uparrow \infty} \left(I^\alpha_S(x)\sqrt{T}  - \sqrt{2x} \right) \le \lim_{x \uparrow \infty} \Big(-u + \sqrt{2x + u^2} - \sqrt{2x}\Big)  = -u \]
for arbitrary $u \in \RR$, and Corollary~\ref{cor:IV} follows.
\end{proof}


\subsection{Relation to Lee's~\cite{Lee} and Benaim-Friz'~\cite{BenaimFriz} asymptotics}

Roger Lee~\cite{Lee} pioneered the analysis of the tail behaviour of the implied volatility under 
the true martingale framework.
He proved that, for a given strictly positive $\PP$-martingale~$X$, 
the large-strike behaviour of the implied volatility $I_X$ is given by 
\begin{equation}\label{eq:LeeSlope}
\limsup_{x\uparrow\infty}\frac{I_X(x)^2 T}{x} = \psi(p^*) \in [0,2],
\end{equation}
where $p^*:=\sup\{p\geq 0: \EE^{\PP}(X_T)<\infty\}$ and $\psi(p)\equiv 2-4(\sqrt{p(p+1)}-p)$.
Similar results also hold for the small-strike behaviour using the negative moments of~$X_T$.
Subsequently, Benaim and Friz~\cite{BenaimFriz} refined~\eqref{eq:LeeSlope}
by providing sufficient conditions under which the $\limsup$ can be strengthened into a genuine limit.
Surprisingly though, the martingale assumption is not explicitly mentioned in~\cite{BenaimFriz}, 
though it is in the review paper~\cite{BenaimFrizLee}.
An immediate consequence of~\eqref{eq:LeeSlope} (or its sharpened version in~\cite{BenaimFriz}) is that 
when $X$ is a $\PP$-martingale, the large-strike slope 
of the total implied variance~$I_X^2(x)T/x$ is equal to two if and only if $p^*=1$, 
namely no moment strictly greater than one exists, so that the distribution of~$X_T$ has a fat right tail.

In the strict local martingale framework this one-to-one correspondence between tail-weight and slope of implied volatility breaks down. 
Consider for instance the strict local martingale $\D S_t = S_t^2 \D W_t^{\QQ}$, starting at $S_0 = 1$,
whose density is given in~\eqref{eq:densityCEV2} in the appendix. A simple Taylor expansion shows that, as $s$ tends to infinity, the following asymptotic behaviour holds
for any $p\geq 0$:
$$
s^p \PP(S_T \in \D s) = \sqrt{\frac{2}{\pi T^3}}\E^{-1/(2T)} s^{p-4}
\left\{1 + \mathcal{O}(s^{-2})\right\}\D s,
$$
and $p^*_S:=\sup\{p\geq 0: \EE^{\PP}(S_T)<\infty\}=3$. 
Should Lee's (or Benaim-Friz') formula hold, then
$\limsup_{x\uparrow\infty}I_s^\alpha(x)^2T/x = \psi(p^*_S)<2$, with $\alpha \in (0,1]$. 
This stands in contradiction to Theorem~\ref{thm:IVExpansion}, which states that 
$\limsup_{x\uparrow\infty}\frac{I_s^\alpha(x)^2T}{x} = 2$. 
This example shows that the results of Lee's~\cite{Lee} and Benaim-Friz'~\cite{BenaimFriz} 
cannot hold for strict local martingales.

\subsection{Testable implications of price bubbles and $\eps$-close implied volatility}\label{sub:test}
As discussed at the beginning of this paper, strict local martingale models have been advocated as models for stock price bubbles. 
In this regard, it is of interest to be able to test empirical data for the appearance of such bubbles. 
In the context of continuous Markov diffusions, such tests have been proposed and implemented by~\cite{Jarrow} 
(see also~\cite{Hulley} for a similar idea) based on statistical estimation of historical volatility. 
Once the diffusion coefficient of the stock price is estimated and extrapolated to the whole real half-line, 
the integral criterion discussed in Example~\ref{ex:diffusion} is used by~\cite{Jarrow} to decide whether the underlying is a strict local martingale or not. Complementary to the statistical approach of~\cite{Jarrow}, our results suggest different ways to test for the appearance of a stock price bubble 
based on implied (as opposed to historical) volatility. 
First, observe that in the case of non-fully-collateralised Calls ($\alpha < 1$) there are simple criteria to distinguish between true and strictly local martingales, based on implied volatility. 
From the results presented above, it follows that unless Calls are fully collateralised there is equivalence of the following statements:
\begin{itemize}
\item The stock price process~$S$ is a strict local martingale under the pricing measure~$\QQ$;
\item Put-and Call-implied volatilities are different;
\item The Call-implied volatility does not exist for sufficiently small strikes.
\end{itemize}
However, for fully collateralised Calls ($\alpha = 1$) these criteria fail. Instead, we derive from Theorem~\ref{thm:IVExpansion} the following criterion:
\begin{itemize}
\item The implied volatility satisfies $I_S(x) = \sqrt{\frac{2x}{T}} + \frac{\nn_T}{\sqrt{T}} + o(1)$ for some $\nn_T \in \RR$.
\end{itemize}
Note that the last criterion is necessary and sufficient.
The necessary part comes from Theorem~\ref{thm:IVExpansion},
and it is sufficient by Corollary~\ref{cor:IV} applied to the case $\mm_T=0$.
In addition, the martingale defect $\mm_T$ can be extracted by setting $\mm_T = \Nn(\nn_T)$.
The drawback of this criterion is that it is an \emph{asymptotic} test, valid only for large $x$. 
This drawback is shared with the statistical test of~\cite{Jarrow} which also requires asymptotic extrapolation of the estimated diffusion coefficient. This property limits the value of the test in practical applications, since implied volatility (or Call prices) can only be observed at a finite number of strikes. 
However, a simple argument shows that any test to determine the strict local martingale property from Put-implied (or fully collateralised Call-implied) volatilities is necessarily an asymptotic test in the sense that it uses arbitrarily large strike values as input: 
let $S$ be a non-negative local $\QQ$-martingale with localising sequence $(\tau_n)_{n \in \mathbb{N}}$, 
so that the stopped processes $S_t^n := S_{t \wedge \tau_n}$ are true $\QQ$-martingales. 
The difference between Put prices on $S$ and $S^n$  can be estimated uniformly on time-strike rectangles $\mathcal{R}_{T,\tilde{x}} := [0,T] \times (-\infty,\tilde{x})$, for any $\tilde{x}\in\RR$, by
$$
\sup_{(t,x) \in \mathcal{R}_{T,\tilde{x}}} |P_S(x) - P_{S^n}(x)|
 = \sup_{(t,x) \in \mathcal{R}_{T,\tilde{x}}}  |\EE^\QQ(\E^x - S_T)_+ - \EE^\QQ(\E^x - S^n_T)_+|
 \le \E^{\tilde{x}} \QQ(\tau_n \ge T).
$$
The bound can be made arbitrarily small by choosing $n$ large enough. Since Put-implied volatilities depend continuously on the Put price this shows that for any non-negative strict local martingale $S$, time-strike rectangle $\mathcal{R}_{T,x_*}$ and $\varepsilon >0$ we can find a true martingale model $S^\varepsilon$, such that Put-implied volatilities (as well as those implied by fully collateralised Calls) are $\varepsilon$-close, uniformly on $\mathcal{R}_{T,x_*}$.

\subsection{Duality and Symmetry of the implied volatility}\label{sub:dualIV}

We consider the consequences of the duality relation studied in Section~\ref{sub:duality} on implied volatilities.

\begin{theorem}
Let~$S$ be a strictly positive strict local $\QQ$-martingale in duality with the true $\PP$-martingale~$M$ with mass at zero. 
Denote by $I_M(x)$ the implied volatility under~$\PP$ for log-strike~$x$ and underlying~$M$. 
Then, for all $x \in \RR$,
\begin{equation}
I^p_S(x) = I^1_S(x) = I_M(-x).
\end{equation}
\end{theorem}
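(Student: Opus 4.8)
The plan is to reduce the claimed identity to a single scalar equation between Black--Scholes prices and then invoke uniqueness of Black--Scholes implied volatility. The two ingredients I would rely on are the duality relation for Put prices from Proposition~\ref{prop:PricesStrictLocal}, namely $P_S(x) = \E^x C_M(-x)$, and the elementary \emph{space symmetry} of the Black--Scholes formula, $P_{\BS}(x,\sigma) = \E^x C_{\BS}(-x,\sigma)$, valid for every $\sigma \ge 0$ and $x \in \RR$.

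First I would record that, by Theorem~\ref{thm:ExistenceIV}(ii), $I^p_S(x) = I^1_S(x)$ for all $x$, so it suffices to prove $I^p_S(x) = I_M(-x)$. Next I would verify the Black--Scholes symmetry identity: starting from Put--Call parity for the Black--Scholes price, $P_{\BS}(x,\sigma) = C_{\BS}(x,\sigma) - 1 + \E^x = \E^x\Nn(-d_-(x,\sigma)) - \Nn(-d_+(x,\sigma))$, and using the algebraic relations $-d_-(x,\sigma) = d_+(-x,\sigma)$ and $-d_+(x,\sigma) = d_-(-x,\sigma)$, which follow at once from $d_\pm(x,\sigma) = -x/(\sigma\sqrt{T}) \pm \tfrac{1}{2}\sigma\sqrt{T}$, one obtains $P_{\BS}(x,\sigma) = \E^x\big(\Nn(d_+(-x,\sigma)) - \E^{-x}\Nn(d_-(-x,\sigma))\big) = \E^x C_{\BS}(-x,\sigma)$.

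With these in hand the argument closes in essentially one line. By definition of implied volatility and the symmetry identity, $P_S(x) = P_{\BS}(x, I^p_S(x)) = \E^x C_{\BS}(-x, I^p_S(x))$, while the duality relation gives $P_S(x) = \E^x C_M(-x) = \E^x C_{\BS}(-x, I_M(-x))$. Cancelling the common factor $\E^x$ yields $C_{\BS}(-x, I^p_S(x)) = C_{\BS}(-x, I_M(-x))$, and since $\sigma \mapsto C_{\BS}(y,\sigma)$ is strictly increasing for each fixed $y$, uniqueness of the implied volatility forces $I^p_S(x) = I_M(-x)$, as required.

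The only genuine point to check --- and what I would flag as the main obstacle --- is that $I_M(-x)$ is well defined for every $x \in \RR$, i.e., that $C_M(-x)$ lies strictly inside the no-static-arbitrage band $[(1-\E^{-x})_+, 1)$. Since $M$ is a non-negative true $\PP$-martingale with $M_0 = 1$, Jensen's inequality gives the lower bound, while the upper bound is strict precisely because $M$ has mass at zero: $\PP(M_T < \E^{-x}) \ge \PP(M_T = 0) > 0$ for all $x$, whence $C_M(-x) = \EE^{\PP}(M_T - \E^{-x})_+ < \EE^{\PP}(M_T) = 1$. Thus $I_M$ is defined on all of $\RR$ and the cancellation above is legitimate throughout, completing the proof.
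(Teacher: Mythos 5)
Your proposal is correct and follows essentially the same route as the paper: the paper's proof is exactly the combination of the duality relation $P_S(x) = \E^x C_M(-x)$ from Proposition~\ref{prop:PricesStrictLocal} with the Black--Scholes relations $d_\pm(-x,\sigma) = -d_\mp(x,\sigma)$ (your symmetry identity $P_{\BS}(x,\sigma) = \E^x C_{\BS}(-x,\sigma)$, written out inline) followed by uniqueness of the implied volatility, and your added verification that $I_M(-x)$ exists is a reasonable explicit version of what the paper compresses into ``existence and uniqueness of the Put smile in the no-static-arbitrage region.'' One small slip in that added check, which does not affect the result: the strict bound $C_M(-x) < 1$ is not due to the mass at zero (on $\{M_T = 0\}$ the payoff inequality $(M_T - \E^{-x})_+ \le M_T$ holds with \emph{equality}), but rather to $\PP(M_T > 0) > 0$, which is automatic since $\EE^{\PP}(M_T) = 1$.
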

\begin{proof}
For any $x\in\RR$, the implied volatility $I^p_S$ is the unique solution to the equation 
$P_{S}(x) = P_{\BS}(x, I^p_S(x))$.
Therefore, using~\eqref{eq:CallPutStrict}, we can write
\begin{align*}
\E^{x}\Nn(-d_-(x, I^p_S(x))) - \Nn(-d_+(x, I^p_S(-x)))
 & = P_{S}(x) = \E^{x}C_{\BS}(-x, I_M(-x)) \\
 & = \E^{x}\Nn(d_+(-x, I_M(-x))) - \Nn(d_-(-x, I_M(-x)))\\
 & = \E^{x}\Nn(-d_-(x, I_M(-x))) - \Nn(-d_+(x, I_M(-x))).
\end{align*}
The theorem follows by existence and uniqueness of the Put smile in the no-static-arbitrage region.
\end{proof}

Using the above theorem we can translate the results of~\cite{DMHJ} 
on the left-wing behaviour of the implied volatility in true martingale models with mass at zero 
directly into results on the right-wing behaviour of the implied volatility in strictly positive strict local martingale models:

\begin{corollary}\label{cor:dual}
Let~$S$ be a strictly positive strict local $\QQ$-martingale, $T > 0$ and~$\mm_T$ the martingale defect of~$S$. 
Set $G(x) := \EE^{\QQ}(S_T \ind_{\{S_T \ge \E^x\}})$ and $\nn_T := \Nn^{-1}(\mm_T)$.
\begin{enumerate}[(i)]
\item If $G(x) = o(x^{-1/2})$ as $x$ tends to infinity, then
\begin{equation*}
I_S^p(x)
 = I_S^1(x) = \sqrt{\frac{2x}{T}} + \frac{\nn_T}{\sqrt{T}} + \frac{\nn_T^2}{2\sqrt{2Tx}}
 + \frac{\exp(\frac{1}{2}\nn_T^2)}{\sqrt{2Tx}}\Psi(x),
\qquad \text{as $x$ tends to infinity},
\end{equation*}
where the function~$\Psi$ is such that $0 \le \limsup_{x \uparrow \infty} \Psi(x) \le 1$.
\item If $G(x) = \mathcal{O}(\E^{-\eps x})$ as $x$ tends to infinity, for some $\eps > 0$, then
\begin{equation*}
I_S^p(x) = I_S^1(x)
 = \sqrt{\frac{2x}{T}} + \frac{\nn_T}{\sqrt{T}} + \frac{\nn_T^2}{2 \sqrt{2Tx}}  + \Phi(x),
\qquad \text{as $x$ tends to infinity},
\end{equation*}
where the function~$\Phi$ satisfies $\limsup_{x \uparrow \infty} \sqrt{2Tx} |\Phi(x)| \le 1$.
\end{enumerate}
\end{corollary}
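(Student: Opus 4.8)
The plan is to reduce the statement to the left-wing asymptotics of the dual martingale~$M$ and then to import the expansions of~\cite{DMHJ}. First I would invoke the symmetry relation $I^p_S(x) = I^1_S(x) = I_M(-x)$ established just above in Section~\ref{sub:dualIV}; since~$S$ is strictly positive and strictly local, Definition~\ref{def:duality} guarantees such a dual pair $(M,\PP)$, and by~\eqref{eq:mm_dual} its mass at zero equals the martingale defect, $\PP(M_T = 0) = \mm_T$. Hence $\nn_T = \Nn^{-1}(\mm_T)$ is exactly the constant attached to~$M$ in~\cite{DMHJ}, and the sought right-wing expansion of $I^p_S$ as $x \uparrow \infty$ is precisely the left-wing expansion of $I_M$ as $-x \downarrow -\infty$.

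The second step translates the hypothesis on~$G$ into a condition on the left tail of~$M_T$. Using $\left.\tfrac{\D\QQ}{\D\PP}\right|_{\Ff_T} = M_T$ and $S_T = M_T^{-1}\ind_{\{T<\tau\}}$ from Definition~\ref{def:duality}, and noting that $\QQ(\tau \le T) = \EE^{\PP}(M_T \ind_{\{\tau \le T\}}) = 0$ because zero is absorbing for~$M$, I would compute
\begin{equation*}
G(x) = \EE^{\QQ}\big(S_T \ind_{\{S_T \ge \E^x\}}\big) = \EE^{\PP}\big(M_T S_T \ind_{\{S_T \ge \E^x\}}\big) = \PP\big(0 < M_T \le \E^{-x}\big).
\end{equation*}
Thus $G(x)$ is exactly the reduced distribution function of~$M_T$ (its cumulative mass just above the atom at zero) evaluated at log-strike~$-x$. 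Under this identification the hypothesis $G(x) = o(x^{-1/2})$ becomes the slowly-decaying left-tail condition of~\cite{DMHJ}, while $G(x) = \mathcal{O}(\E^{-\eps x})$ becomes their exponential-decay condition.

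The final step is to quote the two regimes of~\cite{DMHJ}. Under the hypothesis corresponding to $G(x) = o(x^{-1/2})$ they obtain the left-wing expansion up to the $\tfrac{1}{2}\nn_T^2/\sqrt{2Tx}$ term together with a one-sided remainder of order $x^{-1/2}$ whose normalised $\limsup$ lies in $[0,1]$; under the stronger hypothesis corresponding to $G(x) = \mathcal{O}(\E^{-\eps x})$ the same expansion holds with a two-sided remainder satisfying the sharper bound $\limsup_{x \uparrow \infty}\sqrt{2Tx}\,|\cdot| \le 1$. Substituting $y = -x$, $p = \mm_T$ and relabelling the remainders as~$\Psi$ and~$\Phi$ yields statements~(i) and~(ii) verbatim.

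The main obstacle I expect is the precise matching in Steps two and three: one must verify that the clean quantity $\PP(0 < M_T \le \E^{-x})$ is exactly (or controls, with the correct one-sidedness) the tail functional appearing in the hypotheses of~\cite{DMHJ}, and that the decay rates $o(x^{-1/2})$ and $\mathcal{O}(\E^{-\eps x})$ fall on the correct sides of the dichotomy separating their two expansion orders, so that after the substitution $y = -x$ the coefficients $\nn_T$, $\tfrac{1}{2}\nn_T^2$ and $\exp(\tfrac{1}{2}\nn_T^2)$ appear exactly as stated.
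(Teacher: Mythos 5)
Your proposal is correct and follows essentially the same route as the paper: the measure-change computation $G(x) = \EE^{\PP}\bigl(\ind_{\{M_T \le \E^{-x}\}}\ind_{\{T<\tau\}}\bigr) = F(\E^{-x}) - F(0)$, combined with the duality relation $I^p_S(x) = I^1_S(x) = I_M(-x)$ and a direct appeal to~\cite[Theorem~1.1]{DMHJ}. The matching issue you flag as a potential obstacle dissolves immediately, since $G$ is by construction exactly the left-tail quantity (distribution function of $M_T$ minus its atom at zero) in which the hypotheses of~\cite{DMHJ} are phrased, so the conditions $G(x)=o(x^{-1/2})$ and $G(x)=\mathcal{O}(\E^{-\eps x})$ transfer verbatim under $y=-x$.
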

\begin{proof}
Let~$F$ denote the cumulative distribution function of $M_T$ under $\PP$. Then
\begin{equation*}
G(x) := \EE^{\QQ}\left(S_T \ind_{\{S_T \ge \E^{x}\}}\right)
 = \EE^{\PP}\left(\ind_{\{M_T \le \E^{-x}\}} \ind_{\{T < \tau\}}\right)
= F(\E^{-x}) - F(0).
\end{equation*}
The remaining claims follow from the duality relation $I_M(-x) = I^1_S(x)$  and~\cite[Theorem~1.1]{DMHJ}.
\end{proof}

In~\cite[Cor.~5.2]{DMHJ}, the authors showed that if the underlying is given by a true martingale with mass at zero, then the implied volatility smile cannot be symmetric (in the sense that $I_M(x) = I_M(-x)$). 
Applying the duality relation $I_M(-x) = I_S(x)$ we obtain the following.
\begin{corollary}\label{cor:no_symmetry}
If $S$ is a strictly positive strict local $\QQ$-martingale then the Put smile is not symmetric.
\end{corollary}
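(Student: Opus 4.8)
The plan is to derive the result by contradiction, transporting the non-symmetry statement \cite[Cor.~5.2]{DMHJ} for true martingales with mass at zero across the duality bridge of Section~\ref{sub:dualIV}. First I would invoke the duality machinery to produce a suitable dual object. Since $S$ is a \emph{strictly positive} strict local $\QQ$-martingale, it satisfies the hypotheses of Definition~\ref{def:duality}, so by the duality lemma of Section~\ref{sub:duality} there exists a dual pair $(M,\PP)$ with $M$ a non-negative true $\PP$-martingale. By Lemma~\ref{lem:duality_mass}, the strict local martingale property of $S$ is equivalent to $M_T$ having mass at zero, i.e. $\PP(M_T=0)>0$. Hence $M$ is precisely a true martingale with strictly positive mass at zero, which is exactly the setting to which \cite[Cor.~5.2]{DMHJ} applies.

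Second, I would record the pointwise identity supplied by the theorem immediately preceding this corollary, namely
\begin{equation*}
I^p_S(x) = I_M(-x) \qquad \text{for all } x \in \RR.
\end{equation*}
The contradiction argument is then immediate: suppose the Put smile were symmetric, so that $I^p_S(x) = I^p_S(-x)$ for every $x$. Substituting the duality relation into both sides gives $I_M(-x) = I_M(x)$ for all $x$, that is, the smile $I_M$ of the dual martingale would itself be symmetric. This contradicts \cite[Cor.~5.2]{DMHJ}, which asserts that the implied volatility smile of a true martingale with mass at zero cannot be symmetric. Therefore $I^p_S$ is not symmetric, and since $I^p_S = I^1_S$ by Theorem~\ref{thm:ExistenceIV}, the same conclusion holds for the fully collateralised Call smile.

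There is no serious analytical obstacle here, as the argument is a direct transfer along the duality correspondence; the only point requiring care is confirming that the dual object genuinely meets the hypotheses of \cite[Cor.~5.2]{DMHJ}. This amounts to two verifications, both already available in the excerpt: existence of the dual pair $(M,\PP)$, which is guaranteed by the duality lemma once the technical filtration conditions (right-continuity and standard system, realised on the Skorokhod space as discussed in Section~\ref{sub:duality}) are in place; and the positivity of the mass at zero $\PP(M_T=0)>0$, which is exactly the content of the equivalence in Lemma~\ref{lem:duality_mass}. Once these are in hand the conclusion follows with no further computation.
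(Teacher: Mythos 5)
Your proposal is correct and follows essentially the same route as the paper: the authors obtain the corollary directly by combining \cite[Cor.~5.2]{DMHJ} with the duality relation $I^p_S(x) = I_M(-x)$ from the preceding theorem, exactly as you do. Your explicit verification of the hypotheses---existence of the dual pair via the duality lemma and positivity of the mass at zero via Lemma~\ref{lem:duality_mass}---is a welcome spelling-out of steps the paper leaves implicit.
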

\begin{remark}
For Call-implied volatilities $I^\alpha_S$, the smile can never by symmetric for $\alpha \in [0,1)$, 
for the simple reason that $I^\alpha_S(x)$ does not exist for sufficiently small strikes, 
by Theorem~\ref{thm:ExistenceIV}. 
In the fully collateralised case $\alpha = 1$, the Call-implied volatility coincides with the Put-implied volatility and Corollary~\ref{cor:no_symmetry} applies.
\end{remark}

Again looking at the example of one-dimensional diffusions is instructive:

\begin{example}
We continue Examples~\ref{ex:diffusion} and~\ref{ex:diffusion2} and consider the $\PP$-martingale $M$ defined by 
$\D M_t = \sigma(M_t)\D W_t^\PP$ and the dual $\QQ$-martingale $S$ given by $\D S_t = \tilde \sigma(S_t) \D W_t^\QQ$, where $\tilde \sigma(y) = y^2\sigma(1/y)$. 
The dual models $(S,\QQ)$ and $(M,\PP)$ have the same distribution if $\sigma = \tilde \sigma$. 
In this case it follows from~\ref{eq:example_diffusion} and the discussion in Example~\ref{ex:diffusion2} 
that either $S$ and $M$ are both true martingales without mass at zero or they are both strict local martingales with mass at zero. 
Being a true martingale with mass at zero or a strict local martingale without mass at zero is not compatible with the symmetry assumption $\sigma = \tilde \sigma$, in line with~\cite[Cor.~5.2]{DMHJ} and Corollary~\ref{cor:no_symmetry}.
\end{example}

\subsection{Large-time behaviour}

In~\cite{Tehranchi}, Tehranchi studied the large-time behaviour of the implied volatility,
or more precisely of the total variance $I_S(x,T)^2 T$, where $I_S(x,T)$ now emphasises the dependency of implied volatility on maturity $T$.
His definition of the implied volatility, valid when the underlying stock price is a true martingale, is however not fully adequate for strict local martingales. 
We shall therefore understand it here as the Put-implied volatility~$I_S^p$, or equivalently,
from Theorem~\ref{thm:ExistenceIV}, as the fully collateralised Call implied volatility.
His main result reads as follows:
\begin{theorem}[Theorem 3.1 in~\cite{Tehranchi}]\label{thm:Tehranchi}
Let $(\Omega, \Ff, (\Ff_t)_{t\geq 0}, \QQ)$ be a given probability space, and~$S$ 
a non-negative $\QQ$-local martingale starting at one and such that $\QQ(S_t>0)>0$ for all $t\geq 0$.
If~$S_t$ converges almost surely to zero as~$t$ tends to infinity, then the following holds,
for any real number~$x$:
$$
I_S^p(x)^2 T = -8\log\EE^{\QQ}(S_T\wedge \E^x) - 4\log\left[-\log\EE^{\QQ}(S_T\wedge \E^{x})\right] + 4x-4\log(\pi) +\varepsilon(x,T),
$$
as~$T$ tends to infinity, where the error$~\varepsilon(\cdot, T)$ satisfies some uniform bounds on compacts
as~$T$ increases.
\end{theorem}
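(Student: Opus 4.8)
The plan is to reduce the assertion to a sharp two-tail Gaussian asymptotic and then invert the implied-volatility equation by a bootstrap. First I would introduce the single quantity $q := \EE^{\QQ}(S_T\wedge\E^x)$. Since $(\E^x - S_T)_+ = \E^x - (S_T\wedge\E^x)$, the Put price is \emph{exactly} $P_S(x)=\E^x - q$. The hypothesis $S_t\to 0$ almost surely, combined with the domination $S_T\wedge\E^x\le\E^x$, yields $q\to 0$ as $T\to\infty$ by dominated convergence, while $\QQ(S_t>0)>0$ guarantees $q>0$, so that $\log q$ and the implied volatility are finite and well defined. Writing $w:=I_S^p(x)\sqrt{T}$ for the implied total standard deviation, so that $d_\pm(x,I_S^p(x)) = -x/w\pm w/2$, Black--Scholes Put--Call parity turns the defining equation $P_{\BS}(x,I_S^p(x))=P_S(x)$ into the clean identity
\[
\E^x\,\Nn(d_-)+\Nn(-d_+)=q.
\]

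Next I would note that the time-scaled Black--Scholes Put price is strictly increasing in $w$ with supremum $\E^x$; since $P_S(x)=\E^x-q\uparrow\E^x$, this forces $w\to\infty$ as $T\to\infty$. With $w\to\infty$ available I would apply the Mills-ratio expansion $\Nn(-y)=\phi(y)\,y^{-1}(1+\mathcal{O}(y^{-2}))$ to each tail, writing $y_\pm:=w/2\pm x/w$ so that $\Nn(d_-)=\Nn(-y_+)$ and $\Nn(-d_+)=\Nn(-y_-)$. The key structural observation is that, after absorbing the factor $\E^x$ into the first term, both contributions carry the \emph{common} exponential $\E^{x/2-w^2/8-x^2/(2w^2)}$, so that the identity becomes
\[
\Big(\frac{1}{y_+}+\frac{1}{y_-}\Big)\frac{1}{\sqrt{2\pi}}\,\E^{x/2-w^2/8-x^2/(2w^2)}\big(1+\mathcal{O}(w^{-2})\big)=q,
\]
where $y_+^{-1}+y_-^{-1}=4w^{-1}(1+\mathcal{O}(w^{-4}))$. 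Taking logarithms gives the implicit relation
\[
\frac{w^2}{8}=-\log q+\frac{x}{2}+\log 4-\log w-\tfrac12\log(2\pi)+\mathcal{O}(w^{-2}).
\]

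The final step is to invert this relation by a two-stage bootstrap. To leading order $w^2\sim-8\log q$, so that $\log w=\tfrac12\log(-8\log q)+o(1)=\tfrac12\log 8+\tfrac12\log(-\log q)+o(1)$. Substituting this into the $-\log w$ term, multiplying through by $8$, and collecting the constants via $8\log 4-4\log 8-4\log(2\pi)=-4\log\pi$, I would obtain
\[
I_S^p(x)^2T=w^2=-8\log q-4\log(-\log q)+4x-4\log\pi+\eps(x,T),
\]
with $q=\EE^{\QQ}(S_T\wedge\E^x)$, which is exactly the claimed expansion.

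I expect the crux to lie in this inversion together with the uniformity claim. The difficulty is that $w$ enters both through the dominant term $w^2/8$ and through the subdominant correction $\log w$, so one must justify rigorously that replacing $\log w$ by $\tfrac12\log(-\log q)$ is legitimate, and track how the resulting error combines with the Mills-ratio remainder $\mathcal{O}(w^{-2})$ and with the absorbed term $x^2/(2w^2)$. For the uniformity-on-compacts statement I would exploit that $x\mapsto q(x,T)=\EE^{\QQ}(S_T\wedge\E^x)$ is monotone, hence sandwiched between its values at the endpoints of a compact interval, so that $q\to 0$ (and therefore $w\to\infty$) holds uniformly there; I would then check that each $x$-dependent quantity appearing above---namely $\E^x$, the terms $x/2$ and $x^2/(2w^2)$, and the factor $y_+^{-1}+y_-^{-1}$---is continuous and bounded as $x$ ranges over that set, so that the residual $\eps(x,T)$ admits bounds uniform on compacts as $T$ increases.
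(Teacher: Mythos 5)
First, a point of order: the paper does not prove this statement. It is quoted verbatim as Theorem~3.1 of Tehranchi~\cite{Tehranchi}, so there is no internal proof to compare against; what you have written is, in effect, a reconstruction of Tehranchi's original argument, and it is correct. The exact identity $P_S(x)=\E^x-q$ with $q=\EE^{\QQ}(S_T\wedge\E^x)$, the dominated-convergence argument giving $q\to 0$, the reformulation $\E^x\Nn(d_-)+\Nn(-d_+)=q$, and the forcing of $w=I_S^p(x)\sqrt{T}\to\infty$ are all sound. Your constants check out: indeed $\E^x\phi(w/2+x/w)=\phi(w/2-x/w)$ because $y_+^2-y_-^2=2x$, one has $y_+^{-1}+y_-^{-1}=4w^{-1}\left(1+\mathcal{O}(w^{-4})\right)$, and $8\log 4-4\log 8-4\log(2\pi)=16\log 2-12\log 2-4\log 2-4\log\pi=-4\log\pi$. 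The bootstrap you flag as the crux is legitimate and easy to close: from $w^2/8=-\log q+\mathcal{O}(\log w)$ and $w\to\infty$ one gets $-\log q\sim w^2/8$, whence $\log w-\tfrac12\log(-\log q)-\tfrac12\log 8=\tfrac12\log\frac{w^2}{8(-\log q)}\to 0$, which is exactly the substitution needed; the residual then combines with the Mills-ratio remainder $\mathcal{O}(w^{-2})$ and the term $x^2/(2w^2)$ into $\eps(x,T)$. (Tehranchi's own proof packages the same Gaussian-tail estimates into nonasymptotic two-sided bounds on the implied volatility in terms of $\EE^{\QQ}(S_T\wedge\E^x)$, from which the expansion follows; your big-$\mathcal{O}$ bookkeeping is a condensed version of the same mechanism, not a genuinely different route.)

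Two small points to tighten. For uniformity on compacts $[a,b]$, monotonicity of $x\mapsto q(x,T)$ gives $\sup_{x\in[a,b]}q(x,T)\le q(b,T)\to 0$, but to transfer this into \emph{uniform} divergence of $w(x,T)$ you should also invoke that the Black--Scholes functional $q_{\BS}(x,w)=\E^x\Nn(d_-)+\Nn(-d_+)$ is increasing in $x$ and decreasing in $w$: then $q_{\BS}(a,w(x,T))\le q_{\BS}(x,w(x,T))=q(x,T)\le q(b,T)$ forces a lower bound on $w(x,T)$, uniform in $x\in[a,b]$ and diverging with $T$; your sketch stops one step short of this. Second, well-definedness of $I_S^p(x)$ requires not only $q>0$ (which follows from $\QQ(S_T>0)>0$, so that $P_S(x)<\E^x$) but also the lower no-static-arbitrage bound $P_S(x)\ge(\E^x-1)_+$, which holds by Jensen and the supermartingale property, consistent with Theorem~\ref{thm:ExistenceIV} of the paper; worth a sentence in a full write-up.
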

The assumption that~$S_t$ converges almost surely to zero is equivalent to Put option prices converging 
to the strike~$\E^{x}$ (by dominated convergence)--or the fully collateralised Call price converging to unity.
In~\cite[Lemma 3.3 ]{RogersTehranchi} it was shown that, if~$S$ is a $\QQ$-martingale, then this is also equivalent to 
$I_S(x,T)^2 T$ converging to infinity, for all $x\in\RR$.
An immediate computation shows that this still holds if~$S$ is a (non-negative) strict local $\QQ$-martingale.
Under some assumptions on the behaviour of the moment generating function of the stock price process, 
Tehranchi~\cite{Tehranchi} further expresses the right-hand side of Theorem~\ref{thm:Tehranchi}
in terms of the moment generating function. 
We shall not pursue this characterisation, however, since we are not making any assumptions on the moment generating function of the underlying stock price process.

Let now~$S$ be a non-negative $\QQ$-supermartingale. Since $S$ is bounded in $L^1(\QQ)$, the martingale convergence theorem~\cite[Theorem 69.1]{RogersWilliams} ensures that the limit 
$S_\infty := \lim_{t\uparrow\infty} S_t$ exists almost surely in~$[0,\infty)$.
It is however not immediately clear whether $S_\infty = 0$ almost surely for a general non-negative strict local $\QQ$-martingale, 
as required in Theorem~\ref{thm:Tehranchi}.
In the particular case where~$S$ has continuous paths on some interval $[0,\infty)$,
in natural scale and such that the origin is either natural or absorbing (see~\cite[Chapter 15, Section 6]{KarlinTaylor} 
for details of endpoints, but simply recall that this assumption essentially means that~$S$ is a strict local martingale), then Theorem 3.21 in~\cite{Hulley} guarantees that $S_\infty = 0$, $\QQ$-almost surely,
and therefore Theorem~\ref{thm:Tehranchi} holds.
Recently, Gushchin, Urusov and Zervos~\cite{GUZ14} proved that,
if $S$ is the weak solution to the stochastic differential equation
$\D S_t = \sigma(S_t) \D W_t$, with $S_0 = s_0 \in (0, \infty)$,
where $\sigma:[0,\infty) \to\RR$ is a Borel-measurable function satisfying $\sigma(s)=0$ if and only if $s=0$
and $\int_{A}\sigma^{-2}(s) \D s$ is finite for every non-empty finite interval~$A\subset (0,\infty)$, 
then~$S$ is a strict local martingale if and only if $\lim_{t\uparrow\infty}\EE^{\QQ}(S_t) = 0$.
Since~$S$ is non-negative, this is equivalent to~$S_t$ converging to zero almost surely,
and Theorem~\ref{thm:Tehranchi} applies.

\subsection{Strict local martingales on the real line}
We have so far assumed the framework of a strict local martingale taking values on the non-negative half line.
One could in principle consider local martingales taking values on the whole real line.
The motivation for such a consideration is the unique strong solution 
(on some given probability space $(\Omega, \Ff, (\Ff_t)_{t\geq 0}, \QQ)$)
to the stochastic differential equation
$\D S_t = \Pp(S_t)\D W_t$, starting at $S_0\in \RR$, 
where $\Pp(s)\equiv a s^2 + bs + c$ is a second-order polynomial.
This class of models was first proposed by Rady~\cite{Rady}, and then studied further 
in~\cite{Andersen, Lipton, Zuhlsdorff} and references therein.
It is clear that this class encompasses the standard Brownian motion ($a=b=0$),
the geometric Brownian motion ($a=c=0$) and the reciprocal of the three-dimensional Bessel process
($b=c=0$, see also Section~\ref{sec:CEVExample} with $\beta=1$).
The process~$S$ is clearly a continuous local martingale, and the localisation of the roots of~$\Pp$ as well as the starting point~$S_0$ fully characterise its martingale properties.
Let $\tau$ represent the first hitting time of zero by~$S$, 
and $S^{\tau}:=S_{\cdot\wedge\tau}$ 
the corresponding stopped process.
We summarise the results of Andersen~\cite{Andersen} as follows:
\begin{enumerate}[(i)]
\item $S$ is a true $\QQ$-martingale if and only if
either $a=b=0$ or $\Pp$ has two real roots $l$ and $u$ satisfying $l\leq S_0\leq u$;
\item $S^\tau$ is a true martingale if and only if $a=0$ or $\Pp$ has a unique real root greater or equal than $S_0$;
\item if $\Pp$ has two real roots $l< u$, then
\begin{enumerate}[(a)]
\item if $S_0 >u$, $S$ is a strict supermartingale;
\item if $S_0 <l$, $S$ is a strict submartingale;
\end{enumerate}
\item if $\Pp$ has a single real root~$l=u$, $S$ is a strict supermartingale if $S_0>u$ and a strict submartingale
if $S_0<u$;
\item if $\Pp$ has no real root, then $S$ is an unbounded strict local martingale.
\end{enumerate}
These results are intuitively clear since any real root of~$\Pp$ acts as an absorbing boundary,
and since a local martingale is a supermartingale when bounded below, 
and a submartingale when bounded above.
Andersen~\cite{Andersen} further provides pricing formulae for European Call and Put option prices 
in each of these cases.
Let us look at Case~(iii) more carefully.
In Case~(iii)(a), since the stock price is bounded below by $u$, the Put price (with strike $\E^{x}>0$) is clearly bounded between zero and $(\E^{x}-u)_+$, and is therefore a true $\QQ$-martingale.
This is the framework we have considered so far (with $u=0$).
Clearly, if the strike $\E^{x}$ lies between zero and $u$, then the Put price is always equal to zero.
Symmetrically, consider Case (iii)(b);
in this case, the stock price is bounded above by $l$, and therefore the Call price is bounded 
and therefore a true $\QQ$-martingale.
We can gather these results (and translate them into implied volatilities) in the following proposition:
\begin{proposition}
Let $S$ be a strict local martingale on $(\Omega, \Ff, (\Ff_t)_{t\geq 0}, \QQ)$ 
whose distribution is supported on some interval~$\Ii\subset\RR$.
Let $l\leq u$ be two real numbers, and recall that $x^*(\alpha)$ is defined in Theorem~\ref{thm:ExistenceIV}.
\begin{enumerate}[(i)]
\item if $\Ii=[l, u]$, then the implied volatility is always well defined;
\item If $\Ii = [u, +\infty)$, then the Put-implied volatility is always well defined;
for any $\alpha \in [0,1]$, the collateralised Call implied volatility is only defined on $[x^*(\alpha), +\infty)$;
\item If $\Ii = (-\infty, l]$, then the (fully collateralised) Call-implied volatility is always well defined;
furthermore, there exists $\underline{x}>0$ such that the Put-implied volatility is not well defined
for $x\leq \underline{x}$, but is so on $(\underline{x}, +\infty)$.
\end{enumerate}
Whenever $\E^{x}\notin \Ii$, the implied volatility, when it exists, is null;
the intervals can be taken open instead of closed.
\end{proposition}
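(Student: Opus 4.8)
The plan is to reduce the whole statement to the elementary principle already used in the proofs of Theorems~\ref{thm:ExistenceIV} and~\ref{thm:IVExpansion}: with the Black--Scholes forward normalised to $S_0=1$, the map $\sigma\mapsto C_{\BS}(x,\sigma)$ increases continuously from $(1-\E^x)_+$ to $1$, and $\sigma\mapsto P_{\BS}(x,\sigma)$ from $(\E^x-1)_+$ to $\E^x$. Hence a given Call (resp.\ Put) price admits a finite implied volatility exactly when it lies in the half-open interval $[(1-\E^x)_+,1)$ (resp.\ $[(\E^x-1)_+,\E^x)$), the lower endpoint being attained with \emph{null} implied volatility precisely when the payoff is almost surely affine in $S_T$, i.e.\ when $\E^x$ lies outside the support $\Ii$. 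The proposition therefore amounts to locating, in each of the three cases, the set of $x$ for which $C^\alpha_S(x)$ or $P_S(x)$ lies inside the corresponding interval. The three cases are distinguished by $\EE^{\QQ}(S_T)$, which equals $1$ in case~(i), is strictly smaller than $1$ in case~(ii), and strictly larger than $1$ in case~(iii).

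Case~(i) is immediate. If $\Ii=[l,u]$ is bounded then $S$ has bounded paths and is therefore a true martingale with $\mm_T=0$; Put--Call parity holds, so by Theorem~\ref{thm:ExistenceIV}(i)--(ii) Put- and Call-implied volatilities coincide and exist whenever the common price is interior. For $\E^x\in(l,u)$ this holds by strict convexity of the payoff, while for $\E^x\notin[l,u]$ the payoff is almost surely affine, the price equals $(1-\E^x)_+$ (Call) or $(\E^x-1)_+$ (Put), and the implied volatility is null. This proves that the implied volatility is everywhere defined and vanishes off $\Ii$, and shows that the endpoints $l,u$ may be included or not, since the price already sits at the $\sigma=0$ barrier there.

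For case~(ii), $\Ii=[u,+\infty)$ with $0\le u\le 1$ (the constraint $u\le 1$ follows from $S_0=1\in\Ii$). Then $S$ is a non-negative strict local martingale, so Theorem~\ref{thm:ExistenceIV} applies directly: the Put-implied volatility is defined on all of $\RR$, while the $\alpha$-collateralised Call-implied volatility is defined exactly on $[x^*(\alpha),+\infty)$. For $\E^x<u$ one has $S_T\ge u>\E^x$, whence $P_S(x)=0=(\E^x-1)_+$ and the implied volatility is null, as claimed.

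The remaining case~(iii) is the genuine obstacle, and I would attack it through the reflection $S\mapsto c-S$ (with $c\ge l$), under which $c-S$ is a non-negative strict local supermartingale and Calls on $S$ turn into Puts on $c-S$ and vice versa. This makes the Call payoff $(S_T-\E^x)_+\le (l-\E^x)_+$ bounded and suggests, by analogy with case~(ii), that the Call-implied volatility should exist for all $x$ while the Put-implied volatility should fail on one tail. Concretely I would introduce the gap functions $G_-(x):=P_S(x)-(\E^x-1)_+$ and $G_+(x):=\E^x-P_S(x)$ and, exactly as $F^\alpha$ was handled in the proof of Theorem~\ref{thm:ExistenceIV}, use $P_S'(x)=\E^x\,\QQ(S_T<\E^x)$ to show that $G_+$ is nondecreasing on $\RR$ and $G_-$ is nonincreasing on $(0,+\infty)$, so that the threshold $\underline{x}$ is characterised as the zero of the relevant gap function, in direct analogy with the definition of $x^*(\alpha)$. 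The \emph{main difficulty}, and the reason case~(iii) cannot simply be quoted from Theorem~\ref{thm:ExistenceIV}, is that the reflection does not commute with the Black--Scholes normalisation: the reference forward is fixed at $S_0=1$ whereas $\EE^{\QQ}(S_T)>1$ for the strict submartingale $S$, so the Put price must be tested against \emph{both} barriers $(\E^x-1)_+$ and $\E^x$ simultaneously rather than against a single one. Tracking the limits $P_S(x)\downarrow\EE^{\QQ}(S_T^-)$ as $x\downarrow-\infty$ and $P_S(x)-\E^x\to-\EE^{\QQ}(S_T)$ as $x\uparrow+\infty$ against these two barriers is what pins down the admissible strike region and the sign of $\underline{x}$, and the same monotone analysis governs the null-volatility and endpoint statements off the support, where the identity $P_S(x)=\E^x-\EE^{\QQ}(S_T)$ (valid for $\E^x\ge l$) must be matched to the relevant $\sigma=0$ Black--Scholes value through this forward correction; this two-sided comparison is the step I expect to require the most care.
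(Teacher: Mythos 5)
Your treatment of cases (i) and (ii) matches the paper's (very brief) proof: boundedness forces the true martingale property in case (i), and case (ii) is a direct citation of Theorem~\ref{thm:ExistenceIV}; this part is fine, modulo the unproved assertion $u\ge 0$ needed to place case (ii) in the non-negative framework. The genuine gap is case (iii), which you correctly identify as the crux and then explicitly leave open (``the step I expect to require the most care''). Worse, your own limit computations, if pushed one line further, \emph{contradict} the statement as you read it rather than prove it. For the Call: since $S$ is a strict submartingale, $\EE^{\QQ}(S_T)>1$, and Jensen gives $C_S(x)\ge \EE^{\QQ}(S_T)-\E^x>1$ whenever $\E^x<\EE^{\QQ}(S_T)-1$; the raw Call price then exceeds the Black--Scholes upper barrier $C_{\BS}(x,\sigma)<S_0=1$, so the uncollateralised Call-implied volatility does \emph{not} exist at small strikes --- boundedness of the payoff and ``analogy with case (ii)'' do not suffice, and the ``(fully collateralised)'' qualifier in the statement (here with $\mm_T=1-\EE^{\QQ}(S_T)<0$, so a \emph{downward} correction) is exactly what your argument fails to engage. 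For the Put: your limit $P_S(x)-\E^x\to-\EE^{\QQ}(S_T)$ yields $G_-(x)=P_S(x)-(\E^x-1)\to 1-\EE^{\QQ}(S_T)<0$, i.e.\ the \emph{raw} Put price violates the lower barrier $(\E^x-1)_+$ for all large $x$ (indeed $P_S(x)=\E^x-\EE^{\QQ}(S_T)<\E^x-1$ once $\E^x>l$). So the raw Put-implied volatility cannot exist on a right half-line $(\underline{x},+\infty)$, and your two monotone gap functions, honestly evaluated, show existence at most on a bounded middle window.

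The missing idea is the submartingale analogue of the collateral correction, i.e.\ deciding \emph{which} price is to be matched against the two Black--Scholes barriers. Working with the corrected Put $P_S(x)+(\EE^{\QQ}(S_T)-1)$, the lower-barrier gap becomes $P_S(x)+\EE^{\QQ}(S_T)-\E^x=C_S(x)\ge 0$ identically (so the lower barrier is automatic), while the upper barrier reduces to $C_S(x)<1$; since $C_S$ is continuous and decreasing from $\EE^{\QQ}(S_T^+)\ge\EE^{\QQ}(S_T)>1$ to $0$, this produces the unique threshold $\underline{x}$ solving $C_S(\underline{x})=1$, with failure precisely on the left tail $x\le\underline{x}$ --- which is the structure the proposition asserts and which the paper disposes of with ``symmetric to the Call case in Theorem~\ref{thm:ExistenceIV}, omitted''. (The paper's own three-line proof glosses over the same two-barrier issue, but the task here is your proof: as written, the plan would not close, because you never identify the corrected price for which one barrier holds automatically, and your uncorrected reading of both the Call and the Put claims is false.)
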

\begin{proof}
In case (i), both European Call and Put prices are bounded, so that they are true martingales, 
and the result is straightforward.
Case (ii): the stock price is a strict supermartingale, which is the framework analysed above,
and the result follows from Theorem~\ref{thm:ExistenceIV}.
Case (iii): in this case, the stock price is a strict submartingale.
The Call price is bounded, hence a true martingale, and the Call-implied volatility is always well defined.
Clearly the Put is unbounded, so is not a true martingale.
The proof is symmetric to the Call case in Theorem~\ref{thm:ExistenceIV} and therefore omitted.
\end{proof}

\section{Examples}

\subsection{Strict local martingale with deterministic endpoint}\label{ex:bridge}
Let $T>0$, $\mu \ge 0$, $W$ be a standard Brownian motion, and $(M_t)_{t\geq 0}$ 
be the unique strong solution to the stochastic differential equation
$$
\D M_t = \frac{M_t - \mu}{\sqrt{T-t}}\D W_t
\qquad M_0 = 1.
$$
It is immediate to see that $M_t = (1 - \mu) \exp\left(\widetilde{W}_{\phi_t} - \frac{1}{2}\phi_t\right) + \mu$, 
where~$\widetilde{W}$ is standard Brownian motion and 
$\phi_t \equiv -\log\left(1 - t/T\right)$.
Then, for $\mu \neq 1$, $M$ is a non-negative strict local martingale on $[0,T]$ with $M_T=\mu$ almost surely.
This example shows attainment of lower bounds in~\eqref{eq:call_bounds} and~\eqref{eq:put_bounds}. 
Setting $\mu = 0$ we can create a strict local martingale with mass at zero at time $T$, 
i.e., examples of processes that do not fall within the scope of the duality approach.

\subsection{The CEV model}\label{sec:CEVExample}
On a probability space $(\Omega, \Ff, (\Ff_t)_{t\geq 0}, \QQ)$ supporting a Brownian motion~$W$, 
the stochastic differential equation
$\D S_t = \sigma S_t^{1+\beta}\D W_t$, 
with $S_0>0$,
admits a unique strong solution, which is a true martingale if and only if $\beta<0$ 
(see~\cite[Chapter 6.4]{JCY}).
In the case $\beta>0$, $S$ is a strict local martingale and the process $Y:=S^{-\beta}/(\sigma\beta)$
is a Bessel process of index $1/(2\beta)$ (equivalently of dimension $2+1/\beta$).
By the Feller classification of boundary points~\cite[Chapter 15, Section 6]{KarlinTaylor},
the origin is unattainable in finite time, and for any $t>0$, the transition density of~$S_t$ reads
$$
\PP(S_t \in \D s) = \frac{S_0^{1/2}s^{-2\beta-3/2}}{\sigma^2\beta t}
\exp\left(-\frac{S_0^{-2\beta} + s^{-2\beta}}{2\sigma^2\beta^2 t}\right)
I_{1/(2\beta)}\left(\frac{(S_0 s)^{-\beta}}{\sigma^2 \beta^2 t}\right)\D s,
$$
where $I_\cdot(\cdot)$ is the modified Bessel function of the first kind.
The expectation can be computed in closed form as
$\EE^{\QQ}(S_t) = S_0\Gamma\left(-\frac{1}{2\beta}, \frac{S_0^{-2\beta}}{2\sigma^2 \beta^2 t}\right)$,
where $\Gamma(a,x):=\Gamma(a)^{-1}\int_{0}^{x} u^{a-1}\E^{-u}\D u$ is the (normalised)
incomplete Gamma function.
In the case $\beta=1$, the density simplifies to
\begin{equation}\label{eq:densityCEV2}
\PP\left(S_t \in \D s\right) = \frac{S_0\D s}{\sigma s^3\sqrt{2\pi t}}
\left\{\exp\left(-\frac{(1/s - 1/S_0)^2}{2t\sigma^2}\right) - \exp\left(-\frac{(1/s + 1/S_0)^2}{2t \sigma^2}\right)\right\},
\end{equation}
and $\EE(S_T) = S_0(1 - 2\Nn(-1/(S_0\sqrt{T})))$.
In Figure~\ref{fig:CEV}, we illustrate numerically Theorem~\ref{thm:IVExpansion}.
Regarding the large-time behaviour of the Put implied volatility smile,
Theorem~\ref{thm:Tehranchi} yields (see also Tehranchi~\cite[Example 5.9]{Tehranchi})
$$
I_S(x,T)^2 T = 4\log(T) - 4\log\log(T) + 4x + \varepsilon(x,T),
$$
for all $x\in\RR$ as $T$ tends to infinity.
\begin{figure}[h!tb] 
\centering
\includegraphics[scale=0.4]{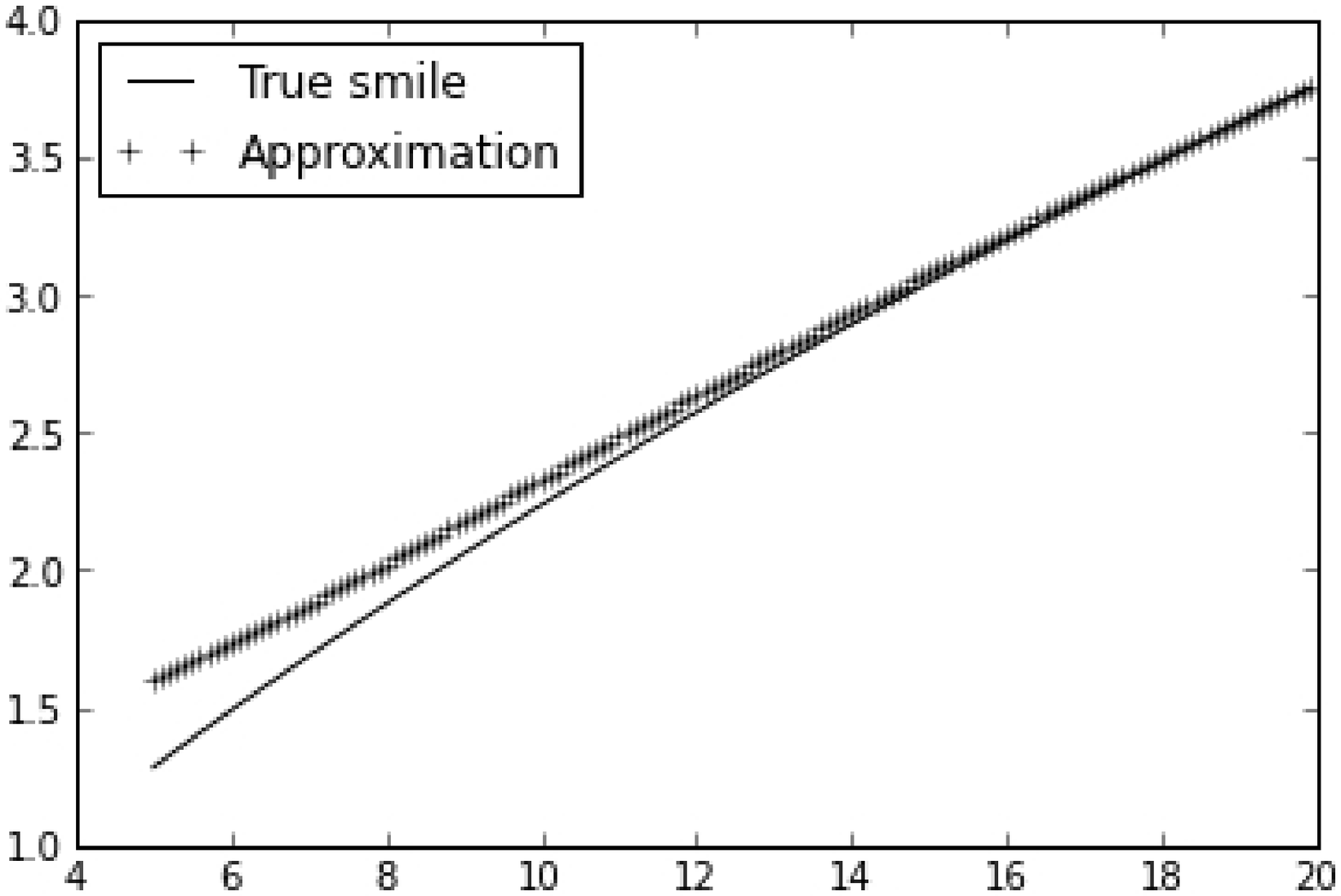}
\includegraphics[scale=0.4]{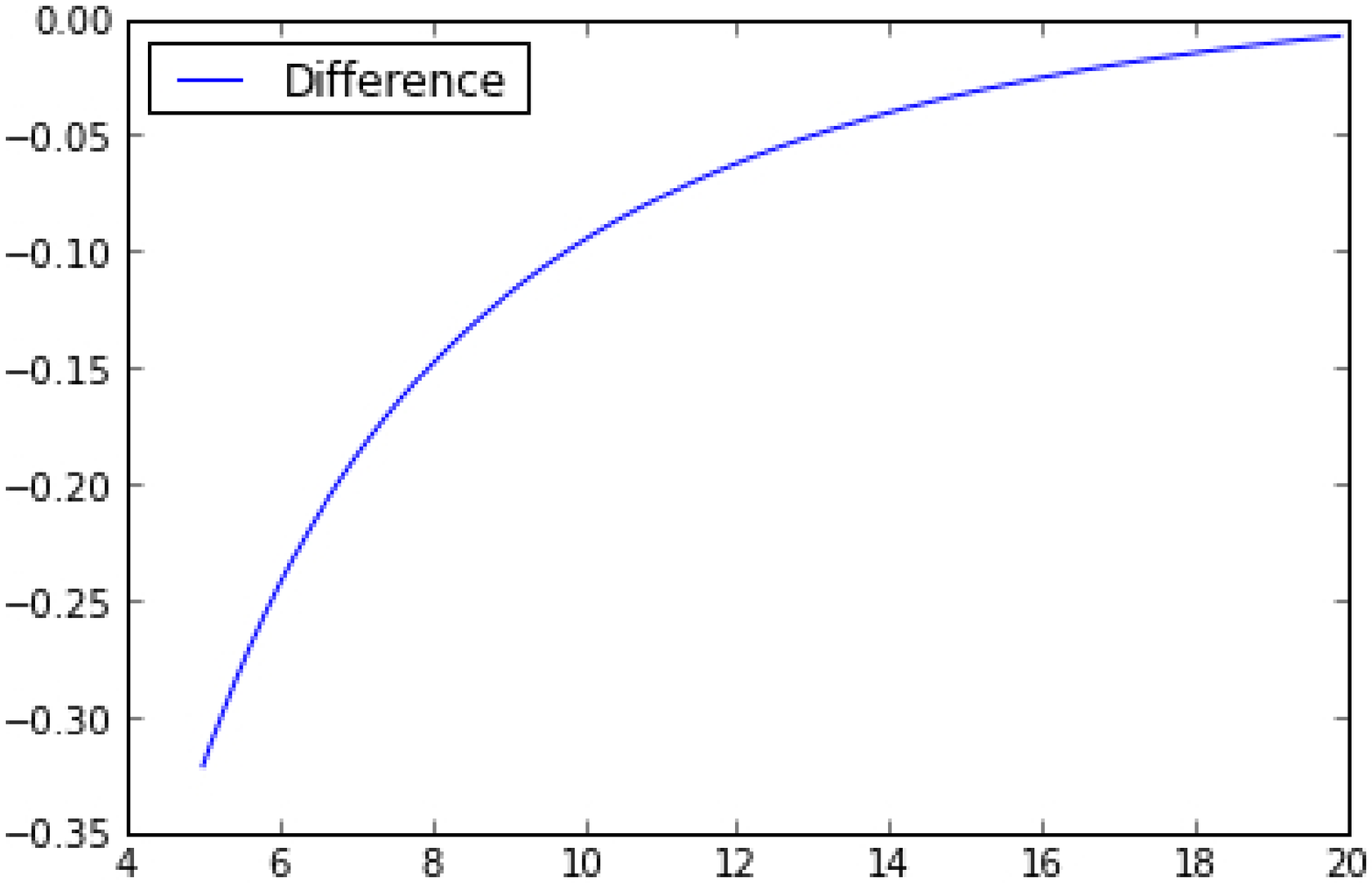}
\caption{
We take $(S_0, \beta, \sigma, T) = (1, 2.4, 10\%, 1)$. 
The horizontal axis represents the log strikes;
the left figures represent the true value of $x\mapsto I_s^c(x)$ (solid line)
and its approximation from Theorem~\ref{thm:IVExpansion} (crosses).
The right graph represents the error between the true value and its approximation.
}
\label{fig:CEV}
\end{figure}



\begin{thebibliography}{99}

\bibitem{Andersen}L. Andersen.
Option pricing with quadratic volatility: a revisit. 
\textit{Finance and Stochastics}, {\tt 15}(2):191-219, 2011.

\bibitem{Azema}J.~Az\'ema, R.F.~Gundy and M.~Yor.
Sur l'int\'egrabilit\'e uniforme des martingales continues.
\textit{S\'eminaire de Probabilit\'es}, {\tt XIV}: 53-61, 1980.

\bibitem{BenaimFriz}S.~Benaim and P.K.~Friz.
Regular variation and smile asymptotics.
\textit{Mathematical Finance}, {\tt 19}(1): 1-12, 2009.

\bibitem{BenaimFrizLee}S.~Benaim, P.K.~Friz and R.~Lee.
On the Black-Scholes Implied Volatility at Extreme Strikes.
Frontiers in Quantitative Finance: Volatility and Credit Risk Modeling.
John Wiley \& Sons, 2008.

\bibitem{Bjork} T.~Bj{\"o}rk. 
Arbitrage theory in continuous time. 
Oxford University press, 2004.

\bibitem{BleiEngelbert} S.~Blei and H.J.~Engelbert.
On exponential local martingales associated with strong Markov continuous local martingales 
\textit{Stochastic processes and their Applications} {\tt 119}: 2859-2880, 2009.

\bibitem{CarrRuf}P. Carr, J. Ruf and T. Fisher.
On the hedging of options on exploding exchange rates.
\textit{Finance and Stochastics}, {\tt 18}(1): 115-144, 2014.

\bibitem{Cox}J.C.~Cox.
The constant elasticity of variance option pricing model.
\textit{Journal of Portfolio Management}, {\tt 23}: 15-17, 1996.

\bibitem{CoxHob}A.~Cox and D.~Hobson.
Local martingales, bubbles and option prices.
\textit{Finance and Stochastics}, {\tt 9}(4): 477-492, 2005.

\bibitem{DMHJ}S.~De Marco, C.~Hillairet and A.~Jacquier.
Shapes of implied volatility with positive mass at zero.
Preprint, \href{arXiv:1310.1020}{arXiv:1310.1020}.

\bibitem{DelScha94}F.~Delbaen and W.~Schachermayer.
A general version of the fundamental theorem of asset pricing.
\textit{Mathematische Annalen}, {\tt 300}: 463-520, 1994.

\bibitem{DelScha}F.~Delbaen and W.~Schachermayer.
Arbitrage possibilities in Bessel processes and their relations to local martingales.
\textit{Probability Theory and Related Fields}, {\tt 102}(3): 357-366, 1995.

\bibitem{DelScha2}F.~Delbaen and W.~Schachermayer.
The fundamental theorem of asset pricing for unbounded stochastic processes.
\textit{Mathematische Annalen}, {\tt 312}: 215-250, 1998.

\bibitem{DelSchaBook}F.~Delbaen and W.~Schachermayer.
The Mathematics of Arbitrage.
Springer Finance, 2006.

\bibitem{DelShi}F.~Delbaen and H.~Shirakawa.
No arbitrage condition for positive diffusion price processes.
\textit{Asia-Pacific Financial Markets}, {\tt 9}: 159-168, 2002.

\bibitem{Eberlein}E.~Eberlein, A.~Papapantoleon and A.~N.~Shiryaev. On the duality principle in option pricing: semimartingale setting. \textit{Finance and Stochastics} {\tt12.2}, 265-292, 2008. 

\bibitem{EngelbertSenf} H.-J.~Engelbert and T.~Senf.
On functionals of Wiener process with drift and exponential local martingales.
\textit{Stochastic processes and related topics}. Proc. Wintersch. Stochastic Processes, Optim. Control Georgenthal/GDR, 45-58, 1990.

\bibitem{Follmer}H.~F\"ollmer.
The exit measure of a supermartingale. 
\textit{Zeitschrift fur Wahrscheinlichkeitstheorie und Verwandte Gebiete}, {\tt 21}: 154-166, 1972.

\bibitem{GatheralBook}J.~Gatheral.
The Volatility surface.
Wiley, 2006.

\bibitem{Geman}H.~Geman, N.~El~Karoui and J.-C.~Rochet. 
Changes of numeraire, changes of probability measure and option pricing. 
\textit{Journal of Applied probability}, {\tt 32}(2): 443-458, 1995.

\bibitem{Guasoni}P.~Guasoni and M.~Rasonyi.
Fragility of arbitrage and bubbles in local martingale diffusion models.
\textit{Finance and Stochastics}, {\tt 19}(2): 215-231, 2015.

\bibitem{Guli}A.~Gulisashvili.
Left-wing asymptotics of the implied volatility in the presence of atoms.
Preprint, \href{arXiv:1311.6027}{arXiv:1311.6027}.

\bibitem{GUZ14}A. Gushchin, M. Urusov, and M. Zervos. 
On the submartingale / supermartingale property of diffusions in natural scale. 
\textit{Proceedings of the Steklov Institute of Mathematics}, {\tt 287}: 122-132, 2014.

\bibitem{HestonBubbles}S.~Heston, M.~Loewenstein and G.~Willard.
Options and bubbles. 
\textit{Review of Financial Studies}, {\tt 20}(2): 359-390, 2007.

\bibitem{Hulley}H.~Hulley and E.~Platen.
A visual classification of local martingales.
\textit{Seminar on Stochastic Analysis, Random Fields and Applications VI}: 147-157, Springer, 2011.

\bibitem{Jacod}J. Jacod and A.N. Shiryaev. 
Limit Theorems for Stochastic Processes, 2nd Edition. Springer Berlin, 2003.

\bibitem{Shimbo1}R.~Jarrow, P.~Protter and K.~Shimbo.
Asset price bubbles in complete markets.
\textit{Advances in Mathematical Finance. Applied and Numerical Harmonic Analysis}: 97-121, 2007.

\bibitem{Shimbo2}R.~Jarrow, P.~Protter and K.~Shimbo.
Asset price bubbles in incomplete markets.
\textit{Math Finance}, {\tt 20}: 145-185, 2010.

\bibitem{Jarrow}R.~Jarrow, Y.~Kchia and P.~Protter.
How to detect an asset bubble.
\textit{SIAM Journal on Financial Mathematics}, {\tt 2}(1): 839-865, 2011.

\bibitem{Kchia1}R.~Jarrow, Y.~Kchia and P.~Protter.
Is there a bubble in LinkedIn's stock price? 
\textit{Journal of Portfolio Management}, {\tt 38}: 125-130, 2011.

\bibitem{Kchia2}R.~Jarrow, Y.~Kchia and P.~Protter.
Is Gold in a bubble? 
\textit{Bloomberg's Risk Newsletter}, {\tt October 26}: 8-9, 2011.
 
\bibitem{JCY}M.~Jeanblanc, M.~Yor and M.~Chesney.
Mathematical methods for financial markets.
Springer, 2009.

\bibitem{Jourdain}B.~Jourdain.
Loss of martingality in asset price models with lognormal stochastic volatility.
\textit{Preprint Cermics}, {\tt 267}, 2004.

\bibitem{KardNik}C.~Kardaras, D.~Kreher and A.~Nikeghbali.
Strict local martingales and bubbles.
\textit{Annals of Applied Probability}, {\tt 25}(4): 1827-1867, 2015.

\bibitem{KarlinTaylor}S.~Karlin and H.M.~Taylor.
A Second Course in Stochastic Processes.
Academic Press, 1981.

\bibitem{Kreps}D. Kreps.
Arbitrage and equilibrium in economics with infinitely many commodities. 
\textit{Journal of Mathematical Economics}, {\tt 8}: 15-35, 1981.

\bibitem{Lee}R.~Lee.
The moment formula for implied volatility at extreme strikes.
\textit{Mathematical Finance}, {\tt 14}(3): 469-480, 2004.

\bibitem{Lewis}A.~Lewis.
Option Valuation under stochastic volatility.
Finance Press, Newport Beach, 2000.

\bibitem{Lipton}A. Lipton.
Mathematical Methods for Foreign Exchange: a Financial Engineer's Approach. 
World Scientiﬁc, 2001.

\bibitem{WillardBubbles}M.~Loewenstein and G.~Willard.
Rational equilibrium asset-pricing bubbles in continuous trading models.
\textit{Journal of Economic Theory}, {\tt 91}: 17-58, 2000.

\bibitem{MadanYor}D.~Madan and M.~Yor.
It\^o's integrated formula for strict local martingales.
\textit{S\'eminaire de Probabilit\'es}, {\tt XXXIX}, Lecture Notes in Mathematics, volume 1874: 157-170. Springer, Berlin, 2006.

\bibitem{Meyer}P.A.~Meyer
La mesure de H. F\"ollmer en th\'eorie des surmartingales. 
\textit{S\'eminaire de Probabilit\'es de Strasbourg}, {\tt 6}: 118-129, 1972.

\bibitem{Pal}S.~Pal and P.~Protter.
Analysis of continuous strict local martingales via h-transforms. 
\textit{Stochastic Processes and Their Applications}, {\tt 120}(8): 1424-1443, 2010.

\bibitem{ProtterReview}P.~Protter.
A mathematical theory of financial bubbles.
\textit{Paris-Princeton Lectures on Mathematical Finance}.
Lecture Notes in Mathematics, {\tt 2081}: 1-108, 2013

\bibitem{Rady}S. Rady.
Option pricing in the presence of natural boundaries and a quadratic diffusion term.
\textit{Finance and Stochastics}, {\tt 1}: 331-344, 1997.

\bibitem{RogersTehranchi}L.C.G. Rogers and M.~Tehranchi.
Can the implied volatility move by parallel shifts.
\textit{Finance and Stochastics}, {\tt 14}(2): 235-248, 2010.

\bibitem{RogersWilliams}L.C.G. Rogers and D. Williams.
Diffusions, Markov Processes, and Martingales: Volume 1, Foundations.
Cambridge University Press; 2nd Edition, 2000.

\bibitem{Ruf}J.~Ruf.
Hedging under arbitrage.
\textit{Mathematical Finance}, {\tt 23}(2): 297-317, 2013.

\bibitem{Sin}C.~Sin.
Complications with stochastic volatility models. 
\textit{Advances in Applied Probability}, {\tt 30}(1): 256-268, 1998.

\bibitem{Tehranchi}M.~Tehranchi.
Asymptotics of implied volatility far from maturity.
\textit{Journal of Applied Probability}, {\tt 46}(3): 629-650, 2009.

\bibitem{Yan}J.A. Yan.
Caract\'erisation d'une classe d'ensembles convexes de $L^1$ ou $H^1$. 
\textit{S\'eminaire de Probabilit\'es {\tt XIV}, Springer Lecture Notes in Mathematics}, {\tt 784}: 220-222, 1980.

\bibitem{Zuhlsdorff}C. Z\"uhlsdorff.
The pricing of derivatives on assets with quadratic volatility.
\textit{Applied Mathematical Finance}, {\tt 8}(4):235-262, 2001.

\end{thebibliography}
\end{document}